\documentclass[10pt]{article}
\usepackage{color, amssymb, amsthm, amsmath, amsfonts, ascmac, comment, enumerate,cite}

\usepackage{graphicx}
\usepackage{hyperref}
\usepackage{cleveref}

\usepackage{multirow}
\usepackage{tcolorbox}
\usepackage{xcolor}
\hypersetup{
  colorlinks, 
  citecolor=blue!20!black!30!green,
  linkcolor=red,
  urlcolor=blue}
\newtheorem{Thm}{Theorem}
\newtheorem*{Thm*}{Theorem}
\newtheorem*{Thm1}{\rm\bf Theorem~\ref{Thm_asymptotics}}
\newtheorem{Prop}{Proposition}	
\newtheorem{Lem}{Lemma}
\newtheorem{Cor}{Corollary}
\newtheorem{Fact}{Fact}
\newtheorem*{Fact1}{\rm\bf Fact~\ref{Fact_f_dist}}
\newtheorem*{Fact2}{\rm\bf Fact~\ref{Fact_rand_sep}}
\newtheorem{Def}{Definition}
\newtheorem{Rem}{Remark}
\newtheorem*{Rem*}{Remark}
\usepackage[top=30truemm,bottom=30truemm,left=20truemm,right=20truemm]{geometry}

\crefname{Prop}{Proposition}{Propositions}
\crefname{Lem}{Lemma}{Lemmas}
\crefname{Fact}{Fact}{Facts}
\crefname{figure}{Fig}{Figures}
\crefname{Thm}{Theorem}{Theorems}
\crefname{Cor}{Corollary}{Corollaries}
\crefname{Def}{Definition}{Definitions}
\crefname{Algo}{Algorithm}{Algorithms}
\crefname{section}{Section}{Sections}

\newcommand{\Bset}{\{0, 1\}}

\newcommand{\ED}{\overline{D}^\mu}

\newcommand{\ER}{\overline{R}}
\newcommand{\E}{\mathbf{E}}
\newcommand{\calA}{\mathcal{A}}
\newcommand{\calB}{\mathcal{B}}
\newcommand{\Cost}{\mathrm{Cost}}
\newcommand{\Mand}{~~\text{and}~~}

\title{Zero-error expectation equals amortized query complexity}
\author{Daiki Suruga \\IQC, University of Waterloo}

\begin{document}
\maketitle
%\tableofcontents
\begin{abstract}
This paper investigates the direct sum question for expected randomized and distributional query complexity.
Our main result gives an exact characterization of the amortized expected randomized query complexity.
For any total relation $f$ and any error tolerance $\varepsilon \in [0,1]$, we prove
\[
\lim_{n \to \infty} \frac{\overline{R}_\varepsilon(f^n)}{n} = (1 - \varepsilon) \overline{R}_0(f).
\]
Thus the amortization converts bounded-error into zero error with the exact multiplicative factor $1-\varepsilon$.
We also prove corresponding liminf/limsup bounds for worst-case randomized and distributional query complexity.
These results improve prior direct-sum bounds that were known only up to constant factors or in restricted error regimes, and they resolve an open question posed by Blais and Brody (2019).
Additionally for one-sided computation of the function $\operatorname{OR}_n \circ f$, we obtain analogous exact amortized identities for both expected and worst-case cost.
\par
As applications, we obtain separations between amortized and single-instance costs, including unbounded separations for distributional complexity and randomized relations, and a quadratic barrier for randomized total functions.

\end{abstract}

\section{Introduction}
The direct-sum problem asks how the complexity of solving $n$ instances of a problem compares with $n$ times its single-instance complexity, a central question in theoretical computer science. It formally addresses the intuition of resource scaling: does the task of solving $n$ independent instances of a problem $f$ necessarily incur $n$ times the cost of solving a single instance? A positive answer---establishing that $\text{Cost}(f^n) \approx n \cdot \text{Cost}(f)$---confirms that resources cannot be shared across independent inputs, whereas a negative answer implies that resources can be optimized when solving multiple instances simultaneously.
Due to its foundational nature, this question has attracted significant attention across diverse research areas, including communication complexity~\cite{FKNN95, CSWY01, JRS03d, BJKS04, BBCR09, JK09, BR11, JPY12, MWY13, KMSY14, Bra17, Jain20, JK22}, query complexity~\cite{JKS10, ACLT10, Dru11, Mon13, MS15, BK18, BB19, BB20, GM21, BTLS23, BKST24}, and circuit complexity~\cite{IW97, IJKW08, GNW11}.

\par
This paper investigates the direct sum question within the frameworks of classical distributional and randomized query complexity.
We also study a related amortization problem for compositions of the form $\operatorname{OR}_n\circ f$ under one-sided error.
In the setting of standard worst-case randomized query complexity, several standard bounds are known.
These include the upper bound $R_\varepsilon(f^n) \leq n \cdot R_{\varepsilon/n}(f)$, where $R_\varepsilon(f)$ denotes the worst-case randomized query complexity of a relation $f$ with error at most $\varepsilon$, as well as lower bounds of the form $R_\varepsilon(f^n) = \Omega(n \cdot R_\varepsilon(f))$.
Ref.~\cite{JKS10} further contributes to the stronger inequality:
\begin{equation}\label{eq_intro_1}
 \delta^2 n \cdot R_{\frac{\varepsilon}{1-\delta} + \delta}(f) \leq R_\varepsilon(f^n).
\end{equation}
On the other hand, recent work has explored expected randomized query complexity, where the cost is averaged over the algorithm's internal randomness.
In this context, Ref.~\cite{BK18} improves the direct sum bound~\eqref{eq_intro_1} to $n \cdot \ER_\varepsilon(f^n) \le \ER_\varepsilon(f^n)$, where $\ER_\varepsilon(f)$ is the expected randomized query complexity for a relation $f$ with error at most $\varepsilon$.
Furthermore, Ref.~\cite{BB19} establishes an optimal direct sum theorem for expected complexity, showing that $\ER_\varepsilon(f^n) = \Theta(n \ER_{\varepsilon/n}(f))$ holds for $\varepsilon \le 1/20$.
This was recently generalized to an optimal direct product theorem, which imposes a stronger requirement than direct sum theorems, by Ref.~\cite{BB25}.

\par
Analogous results exist in the setting of distributional query complexity, where inputs are drawn according to a distribution $\mu$ and the distributional query complexity is denoted by $D^\mu_\varepsilon(f)$.
Standard folklore bounds include the lower bound $D^{\mu^n}_{\varepsilon}(f^n) = \Omega(n \cdot D_{\varepsilon}^\mu(f))$ and the upper bound $D^{\mu^n}_{\varepsilon}(f^n) = O(n \cdot D^\mu_{\varepsilon/n}(f))$.
A less trivial result is established in Ref.~\cite{Sha01}, which demonstrates the existence of a function $f$ satisfying\footnote{Some papers explain the result~\cite{Sha01} as
$D^{\mu^n}_{\varepsilon}(f^n) = O\left(\varepsilon D_{\frac{\varepsilon}{n}}^\mu(f) \right)$, which is misleading and not correct when $n$ is not fixed.}
\begin{equation*}
D^{\mu^n}_{\varepsilon}(f^n) = O\left(\varepsilon D_{\frac{\varepsilon}{n}}^\mu(f) + n \log \frac{n}{\varepsilon}\right).
\end{equation*}
Consequently, a general direct-\emph{product} theorem fails for distributional query complexity,
 although Ref.~\cite{Dru11} has shown that a specific variant of the direct product theorem does hold.
In the context of expected distributional query complexity, where the expected distributional query complexity is denoted by $\ED_\varepsilon(f)$, Ref.~\cite{BKST24} recently established the following direct sum theorem:
\begin{equation}\label{eq_intro_2}
\overline{D}^{\mu^n}_{\varepsilon}(f^n) = \tilde{\Omega}(\varepsilon^2 n) \overline{D}^{\mu}_{\Theta(\varepsilon/n)}(f),
\end{equation}
which nearly matches the folklore upper bound $\overline{D}^{\mu^n}_{\varepsilon}(f^n) = O(n \overline{D}^\mu_{\varepsilon/n}(f))$ up to constant factors when $\varepsilon$ is fixed.
\par

\par
The preceding results characterize the direct-sum behavior up to constant or polylogarithmic factors, but they do not identify the exact asymptotic coefficient in the global-error expected-cost model.  In particular, the exact value of
\[
\lim_{n\to\infty}
\frac{\ER_\varepsilon(f^n)}{n}
\]
was not known.  Ref.~\cite{BB19} also left open whether their strong direct-sum theorem for expected query complexity has an analogue in the worst-case randomized query model.

Our first main theorem identifies the exact expected-cost limit as
\[
(1-\varepsilon)\ER_0(f)
\]
for every total relation and every \(\varepsilon\in[0,1]\).  Its worst-case upper bound also gives a negative answer to the question of Ref.~\cite{BB19}: a lower bound of the form
\[
R_\varepsilon(f^n)
=
\Omega\!\left(
nR_{\varepsilon/n}(f)
\right)
\]
does not hold uniformly over all total Boolean functions.  For the per-distribution expected measure and the worst-case distributional and randomized measures, we obtain liminf and limsup bounds, while the existence and exact values of the corresponding limits remain open.

Beyond these asymptotic questions, determining the optimal parameter dependence in the direct-sum bound~\eqref{eq_intro_2}, as posed in Ref.~\cite{BKST24}, remains open outside the regimes covered by the present paper.

\par
Alongside direct sum theorems for the repeated relation $f^n$, composition
theorems in randomized query complexity have also received considerable
attention~\cite{BK18,BGK+20,GM21,BTLS23}. 
For compositions with the outer
function $\operatorname{OR}_n$, the AND-composition lower bound of
Ref.~\cite{GJPW17}, together with
complementation, implies
\[
R_{1/3}(\operatorname{OR}_n\circ f)
=
\Omega\bigl(nR_{1/3}(f)\bigr).
\]
Ref.~\cite{BGK+20} established the matching upper bound
\[
R_{1/3}(\operatorname{OR}_n\circ f)
=
O\bigl(nR_{1/3}(f)\bigr)
\]
without requiring error amplification for the inner algorithm.  Thus, the
standard bounded-error complexity of $\operatorname{OR}_n\circ f$ is known
up to constant factors.

The present paper considers a different, one-sided-error setting and obtains
an exact amortized characterization rather than a constant-factor estimate.
We show that the expected query complexity per input block converges to the
single-copy quantity $c_\varepsilon(f)$, which measures the expected cost of
a one-sided algorithm only on zero-inputs.  For every $\varepsilon>0$, the
same quantity also characterizes the amortized worst-case query complexity.
\subsection{Our results}
Our first main result is stated formally in \cref{Thm_asymptotics}:

\begin{Thm1}
For the expected-cost scenarios,
\begin{align}
\ED_0(f) - \varepsilon \ER_0(f)
&\leq \liminf_{n \to \infty} \frac{\overline{D}_\varepsilon^{\mu^n}(f^n)}{n}
\leq \limsup_{n \to \infty} \frac{\overline{D}_\varepsilon^{\mu^n}(f^n)}{n}
\leq  (1- \varepsilon) \ED_0(f), \label{eq_Thm1_1}\\
\lim_{n \to \infty} \frac{\overline{R}_\varepsilon(f^n)}{n}
&= (1- \varepsilon) \ER_0(f), \label{eq_Thm1_2}
\end{align}
hold for any total relation $f \subseteq \mathcal{X} \times \mathcal{Z}$ and any $\varepsilon \in [0, 1]$.
For the worst-case scenarios,
\begin{align}
 \ED_0(f)  - \varepsilon \ER_0(f)
&\leq  \liminf_{n \to \infty} \frac{D_\varepsilon^{\mu^n}(f^n)}{n}
\leq  \limsup_{n \to \infty} \frac{D_\varepsilon^{\mu^n}(f^n)}{n}
\leq \ED_0(f),\label{align_Intro_1}\\
(1- \varepsilon) \ER_0(f)
&\leq  \liminf_{n \to \infty} \frac{R_\varepsilon(f^n)}{n}
\leq  \limsup_{n \to \infty} \frac{R_\varepsilon(f^n)}{n}
\leq \ER_0(f), \label{align_Intro_2}
\end{align}
hold for any $f$ and any $\varepsilon \in (0, 1]$.
\end{Thm1}

The exact convergence statement in \eqref{eq_Thm1_2} is the main theorem of the paper: the amortized expected randomized query complexity of $f^n$ is exactly $(1-\varepsilon)\ER_0(f)$.
The per-distribution and worst-case statements in \eqref{eq_Thm1_1}, \eqref{align_Intro_1}, and \eqref{align_Intro_2} are stated as liminf/limsup bounds rather than as convergence theorems. 
The two endpoints in~\eqref{eq_Thm1_1} coincide whenever $\ED_0(f) \cong \ER_0(f)$ (which must exist due to the Yao's lemma) in which case their common value is $(1 -\varepsilon) \ER_0(f)$, but they leave open the general convergence question for these auxiliary measures.

For partial Boolean functions, if one ignores constant factors, related amortized consequences can be extracted from earlier direct-product and strong direct-sum theorems such as Refs.~\cite{Dru11,BB19,BB25}. The present work is aimed at the exact constant and at the more general setting of relations. In particular, \cref{Thm_asymptotics} identifies the precise multiplicative factor $1-\varepsilon$, applies to total relations rather than only Boolean functions, and remains meaningful for all $\varepsilon\in[0,1]$, including error parameters above $1/2$ that naturally arise for non-Boolean functions and relations.

 \cref{Thm_asymptotics} directly implies an answer to the question posed in~\cite{BB19}; \cref{Thm_asymptotics} shows that $R_\varepsilon(f^n)$ is asymptotically governed by $n\ER_0(f)$ (up to small error dependence), not by $nR_{\varepsilon/n}(f)$ in general. This distinction is necessary because there exist functions for which $\ER_0(f) \ll R_{\varepsilon/n}(f)$, as shown in Ref.~\cite{ABB+16} and \cref{Fact_appendix_cont}. 

Note additionally that the worst-case bounds in \eqref{align_Intro_1} and \eqref{align_Intro_2} are not claimed to be tight in general. The upper bounds follow from truncating repeated zero-error algorithms, while the lower bounds follow by comparison with expected cost. These two arguments match for the expected randomized measure, but they leave a gap for worst-case measures. Closing this gap, or giving examples showing that the gap is unavoidable, is left as an open problem.

\paragraph{One-sided OR compositions.}
Our second main result gives an exact amortization theorem for
$\operatorname{OR}_n\circ f$ under one-sided error.
Let $f:\mathcal{X}\to\{0,1\}$ be a Boolean partial function, and suppose that
$\mathcal{X}_0:=f^{-1}(0)$ is nonempty.  For $\varepsilon\in[0,1]$, define
\[
c_\varepsilon(f)
:=
\inf_A\max_{x\in\mathcal{X}_0}
\E_R[|A_R(x)|],
\]
where the infimum ranges over randomized query algorithms $A$ that have zero
false-positive error and false-negative error at most $\varepsilon$ on every
one-input.

As stated formally in \cref{thm:or-composition}, we prove
\[
\lim_{n\to\infty}
\frac{
\ER_{\mathrm{FP}=0,\mathrm{FN}\leq\varepsilon}
(\operatorname{OR}_n\circ f)}
{n}
=
c_\varepsilon(f)
\qquad
\text{for every }\varepsilon\in[0,1].
\]
For worst-case query complexity, we prove the corresponding identity
\[
\lim_{n\to\infty}
\frac{
R_{\mathrm{FP}=0,\mathrm{FN}\leq\varepsilon}
(\operatorname{OR}_n\circ f)}
{n}
=
c_\varepsilon(f)
\qquad
\text{for every }\varepsilon\in(0,1].
\]

Thus, both the expected-cost and worst-case amortized rates are determined by
the same single-copy quantity.  In contrast to the repeated-relation theorem,
this quantity measures the expected cost only on zero-inputs.  Positive input
blocks affect the error analysis, but their contribution to the query cost can
be reduced to a sublinear overhead.

\subsection{Separations between amortized and single-instance costs}
As consequences of \cref{Thm_asymptotics}, we derive strong separation results between amortized cost and the standard cost of a single instance.

First, in the setting of \emph{distributional query complexity}, we demonstrate an unbounded separation.
We begin by establishing the existence of a function with a large gap between its expected zero-error complexity and its worst-case error complexity:

\begin{Fact}\label{Fact_f_dist}
There exists a sequence of functions $f_m:\Bset^m \to \Bset$, distributions $\mu_m$ on $\Bset^m$, and an error parameter $0 < \varepsilon_m <1$ such that
\begin{equation*}
\overline{D}^{\mu_m}_0(f_m) = O(1) \Mand D^{\mu_m}_\varepsilon(f_m) = \Omega(m).
\end{equation*}
\end{Fact}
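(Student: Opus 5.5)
We will give an explicit construction. The idea is to design $\mu_m$ so that almost all of its mass sits on inputs that a deterministic tree certifies with $O(1)$ queries, while a small but non-negligible mass $\delta$ sits on hard inputs. On the hard inputs, zero-error computation needs $\Theta(m)$ queries, and any tree that errs with probability below $\delta$ must also spend $\Omega(m)$ queries on some input. The expected cost then stays $O(1)$ provided $\delta\cdot m = O(1)$, while the worst-case cost is forced to be $\Omega(m)$ once $\varepsilon<\delta/2$.

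\textbf{Concrete choice.} Take $f_m=\mathrm{OR}_m$. Let $\mu_m$ put mass $1/2$ on the all-zeros string $0^m$, mass $1/(2m)$ on each weight-one string $e_i$, and the remaining mass uniformly on strings of weight $\ge m/2$. Under this distribution the weight-one strings carry total mass $1/2$, not $O(1/m)$, so the expected-cost accounting needs adjusting: the pairs $\{0^m,e_i\}$ make everything hard.

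\textbf{Alternative choice.} Instead, put mass $1-1/m$ on uniformly random strings of weight $\ge m/2$, mass $1/(2m)$ on $0^m$, and mass $1/(2m)$ spread uniformly over the $e_i$. The zero-error tree queries bits in order until it finds a $1$ or has queried all bits. Its expected cost is at most $\sim 2$ from the heavy part plus $m\cdot(1/m)=O(1)$ from the light part, so $\overline{D}^{\mu_m}_0(f_m)=O(1)$.

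\textbf{Lower bound for the alternative.} Set $\varepsilon_m=1/(8m)$ and suppose $T$ is a deterministic tree of depth $d\le m/2$. Follow the path of $T$ on $0^m$; it queries at most $d$ coordinates. If $T$ outputs $1$ there, it errs with mass $1/(2m)>\varepsilon_m$. If it outputs $0$, then every $e_i$ with $i$ unqueried follows the same path and is answered $0$. There are at least $m/2$ such $i$, carrying total mass $\ge 1/(4m)>\varepsilon_m$. In either case $T$ errs too much, so $D^{\mu_m}_{\varepsilon_m}(f_m)>m/2=\Omega(m)$.

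\textbf{Main obstacle.} The delicate step is calibrating the masses. The hard part must be light enough (total $O(1/m)$) that its linear cost keeps the expectation $O(1)$, yet heavier than $\varepsilon_m$ so the error budget cannot absorb it. The statement allows $\varepsilon_m$ to depend on $m$, which is exactly what makes this calibration possible. I would also double-check that the heavy part's expected search cost is bounded: each queried bit of a weight-$\ge m/2$ random string is $1$ with probability $\ge 1/2$ roughly independently, giving an expected cost of about $2$.
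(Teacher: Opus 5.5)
Your final construction (the ``alternative choice'') is correct. It follows the same template as the paper but instantiates it differently. Both proofs put mass $\Theta(1/m)$ on inputs that need $\Theta(m)$ queries, so a zero-error tree still has expected cost $O(1)$. Both then take $\varepsilon_m$ below that mass, so an erring tree cannot simply give up on those inputs.

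The paper uses an artificial ``switch'' function: $\Pr[x_1=1]=1/n$, $f$ is the parity of $x_2,\dots,x_n$ when $x_1=1$ and $0$ otherwise, and $\varepsilon=1/n^2$. Its zero-error bound is immediate, with expected cost $1+(n-1)/n$. Its lower bound is a reduction: fixing $x_1=1$ turns any tree with error $1/n^2$ under $\mu$ into a tree for $(n-1)$-bit parity under the uniform distribution with error $1/n$, which must read every bit.

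You instead use the natural function $\mathrm{OR}_m$ and its sensitivity at $0^m$. Your path-following argument is self-contained: the leaf reached by $0^m$ is also reached by every $e_i$ with $i$ unqueried, so no parity lower bound is needed. It also works with $\varepsilon_m=\Theta(1/m)$, which is the best possible order. Indeed, truncating an $O(1)$-expected-cost zero-error tree at $O(1/\varepsilon)$ queries (Markov) gives $D^{\mu}_{\varepsilon}(f)=O(1/\varepsilon)$. The paper's $1/n^2$ could likewise be raised to anything below $1/(2n)$.

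The price of your route is that the easy part needs a short computation, and a few tidy-ups are needed:
\begin{itemize}
\item Replace ``roughly independently'' by the exact fact: for a uniform string of weight $w\ge m/2$, conditioned on the first $j$ queried bits being $0$, the next bit is $1$ with probability $w/(m-j)\ge 1/2$. So the number of queries is dominated by a geometric variable with mean $2$.
\item Note that $D^{\mu}_{\varepsilon}$ bounds queries only on $\operatorname{supp}(\mu_m)$. This is harmless, because $0^m$ lies in the support.
\item Delete the abandoned first ``concrete choice'' paragraph.
\end{itemize}
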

(The proof of \cref{Fact_f_dist} is provided in \cref{Appendix_Proof} for completeness.)

By combining \cref{Fact_f_dist} with the upper bound in \cref{Thm_asymptotics}, we obtain the following corollary, which confirms that the amortized cost can be arbitrarily smaller than the single-instance cost:

\begin{Cor}
There exists a total function $f:\Bset^m \to \Bset$, a distribution $\mu$, and a positive error $\varepsilon$ such that
\begin{equation*}
\limsup_{n \to \infty} \frac{D^{\mu^n}_\varepsilon(f^n)}{n} = O(1) \Mand D^\mu_\varepsilon(f) = \Omega(m).
\end{equation*}
\end{Cor}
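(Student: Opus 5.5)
The corollary follows by specializing \cref{Fact_f_dist} to one suitable index $m$ and combining it with the upper bound in \eqref{align_Intro_1} of \cref{Thm_asymptotics}. Take the sequence $f_m$, $\mu_m$, $\varepsilon_m$ from \cref{Fact_f_dist}. Fix $m$ and set $f:=f_m$, $\mu:=\mu_m$, $\varepsilon:=\varepsilon_m$. Since $f_m:\Bset^m\to\Bset$ is a total Boolean function, it is in particular a total relation. The error satisfies $\varepsilon_m\in(0,1)\subseteq(0,1]$. Hence the worst-case distributional bound \eqref{align_Intro_1} applies and gives
\[
\limsup_{n\to\infty}\frac{D^{\mu^n}_{\varepsilon}(f^n)}{n}\le \overline{D}^{\mu}_0(f)=O(1),
\]
where the last equality is the first half of \cref{Fact_f_dist}. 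The second claimed bound, $D^\mu_\varepsilon(f)=\Omega(m)$, is exactly the second half of \cref{Fact_f_dist}.

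The only point needing care is that the two asymptotic statements must refer to the same error parameter and hold uniformly in $m$. The $O(1)$ constant bounds the zero-error expected cost, and that quantity does not depend on $\varepsilon$. The $\Omega(m)$ bound is stated at the same $\varepsilon_m$ that we feed into \cref{Thm_asymptotics}. So the corollary is to be read as a statement about the family $(f_m,\mu_m,\varepsilon_m)_m$, and both bounds hold simultaneously for every $m$, with constants independent of $m$.

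The main obstacle is therefore not this deduction but \cref{Fact_f_dist} itself, which is deferred to the appendix. For it I would use a pointer-type function. Under $\mu_m$, with overwhelming probability a few queries reveal a certificate. With the remaining small probability $\varepsilon'$, the input is hard: it is drawn from a distribution on which any algorithm erring with probability at most $\varepsilon$ conditionally must make $\Omega(m)$ queries, for example a random instance of parity. The key requirement is $\varepsilon'\cdot m=O(1)$, so that the zero-error expected cost stays $O(1)$. At the same time $\varepsilon$ must be taken small enough relative to $\varepsilon'$ that a worst-case algorithm cannot simply give up on the hard part. Any error budget of $\varepsilon<\varepsilon'/2$ forces $\Omega(m)$ queries on some inputs in the support, which yields the worst-case lower bound.
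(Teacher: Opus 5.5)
Your proof is correct and follows the paper's route: fix $m$, take $(f_m,\mu_m,\varepsilon_m)$ from \cref{Fact_f_dist}, bound $\limsup_{n} D^{\mu^n}_{\varepsilon}(f^n)/n \le \overline{D}^{\mu}_0(f)=O(1)$ using the worst-case distributional upper bound of \cref{Thm_asymptotics}, and take $D^\mu_\varepsilon(f)=\Omega(m)$ directly from the Fact. Your sketch of the Fact (a switch bit that leads to a parity instance with probability about $1/m$, and $\varepsilon$ below half that probability) also matches the appendix construction, which uses $\Pr[x_1=1]=1/m$ and $\varepsilon=1/m^2$.
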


For randomized query complexity, functions and relations exhibit different behavior.
For total functions, the separation is bounded polynomially due to the following known relationship between randomized and zero-error expected complexity:

\begin{Fact}[\cite{BI87,HH86,Tar89}]\label{Fact_rand_sep}
For any total function $f$ and any error $0 < \varepsilon <1$,
\begin{equation*}
R_\varepsilon(f) \leq D(f) \leq \overline{R}_0(f)^2.
\end{equation*}
\end{Fact}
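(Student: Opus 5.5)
The statement to prove is \cref{Fact_rand_sep}: $R_\varepsilon(f) \le D(f) \le \ER_0(f)^2$ for every total function $f$. The first inequality is immediate, since a deterministic decision tree is a randomized algorithm with zero error. For the second inequality I will use certificate complexity, following Blum--Impagliazzo, Hartmanis--Hemachandra and Tardos. The proof has two steps.

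\emph{Step 1: $D(f) \le C_0(f)\,C_1(f)$, where $C_b$ denotes the maximum over $b$-inputs of the minimal certificate size.}
The deterministic tree runs in rounds. In each round it picks a $0$-certificate (a partial assignment of size at most $C_0(f)$ forcing $f=0$) that is consistent with all answers seen so far, and queries all of its unqueried variables. If the answers agree with the certificate, it outputs $0$. If no consistent $0$-certificate exists, it outputs $1$. This is correct because totality of $f$ guarantees that some input consistent with the answers is a $0$-input, and that input has a certificate.

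The key observation is that every $0$-certificate and every $1$-certificate must conflict on some variable, since otherwise one input would satisfy both. So each failed round fixes at least one new variable of every $1$-certificate that is still consistent with the answers.

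I will argue by contradiction, fixing a $1$-input $y$ and its minimal certificate $c_y$, and supposing the algorithm is still running after $C_1(f)$ failed rounds. Since $c_y$ is still consistent with the answers, each of those rounds queried a new variable of $c_y$. After $C_1(f)$ rounds all of $c_y$ has been queried, so no consistent $0$-certificate remains and the algorithm has stopped. This contradiction shows there are at most $C_1(f)+1$ rounds, and the last round only checks consistency. Counting queries gives $D(f) \le C_0(f)C_1(f)$, and a small bookkeeping adjustment (or the symmetric argument) removes the off-by-one.

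\emph{Step 2: $C_b(f) \le \ER_0(f)$ for $b \in \{0,1\}$.}
Take a zero-error randomized algorithm $A$ with expected cost at most $\ER_0(f)$ on every input, and fix an input $x$. Some random string $r$ yields a run of length at most $\E_R[|A_R(x)|] \le \ER_0(f)$. On that run $A_r$ outputs $f(x)$ with zero error on every input consistent with the queried bits. Hence the queried bits form a certificate for $x$ of size at most $\ER_0(f)$.

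Combining the two steps gives $D(f) \le C_0(f)C_1(f) \le \ER_0(f)^2$.

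The main obstacle is the round-counting argument in Step 1, specifically making sure that every round hits a new variable of every surviving $1$-certificate. This depends on choosing only certificates consistent with the answers so far, and on $f$ being total.
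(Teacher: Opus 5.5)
The paper gives no proof of this fact; it only cites the literature. The appendix block carrying this label actually proves a different claim, $R_\varepsilon(f)=D(f)$ for $\varepsilon<1/|\mathcal{X}|$. Your route, $D(f)\le C_0(f)C_1(f)$ plus $C_b(f)\le\overline{R}_0(f)$, is the standard argument behind the citation. The trivial inequality and Step 2 are fine. Since $\overline{R}_0(f)$ is an infimum, either note that it is attained over the finitely many trees of depth at most $m$, or run Step 2 with $\overline{R}_0(f)+\eta$ and use that certificate sizes are integers.

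\textbf{The gap: the round count in Step 1 fails on $0$-inputs.} Your contradiction needs a $1$-input $y$ whose certificate $c_y$ is still consistent with the answers. That holds when $y$ is the actual input. On a $0$-input, however, the answers can exclude every $1$-input. From then on a failed round makes progress on no $1$-certificate, and your algorithm need not stop: it outputs $0$ only when the chosen certificate agrees, and outputs $1$ only when no consistent $0$-certificate exists.

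Here is a concrete case. Let $f(a,s,x)=a\wedge x_s$ with $s\in\{0,1\}^t$ and $x\in\{0,1\}^{2^t}$, so $C_0(f)=t+1$ and $C_1(f)=t+2$. Take the input with $a=0$ and $x\equiv 1$. Your rule allows choosing, round after round, $\{a=0\}\cup\{x_i=0: i\in S\}$ with $S$ a set of at most $t$ not-yet-queried data indices. Each such assignment has size at most $C_0(f)$, forces $f=0$, is consistent with the answers, and fails. The algorithm therefore queries all $2^t$ data bits, which is more than $(t+1)(t+2)$ once $t\ge 6$. So ``at most $C_1(f)+1$ rounds'' is false for the algorithm as stated, and the off-by-one is not mere bookkeeping.

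\textbf{The repair} is the one used in the standard proof. At the start of each round, if the answers already determine $f$, output that value (equivalently, work with certificates of the restricted function). Now suppose the answers do not determine $f$:
\begin{itemize}
\item Some $0$-input is consistent, so a consistent $0$-certificate of size at most $C_0(f)$ exists.
\item Some $1$-input $y$ is consistent. Since answers only accumulate, $c_y$ was consistent at the start of every earlier round.
\item By your conflict observation, each earlier round therefore queried a new variable of $c_y$.
\item Not all of $c_y$ has been queried, since otherwise $f$ would be determined.
\end{itemize}
Hence a querying round can occur only after fewer than $|c_y|\le C_1(f)$ failed rounds. There are at most $C_1(f)$ querying rounds of at most $C_0(f)$ queries each, so $D(f)\le C_0(f)C_1(f)$ exactly, with no off-by-one.

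Your rule for outputting $1$ becomes a special case of the stopping test. Its justification should read: if the input were a $0$-input, its own $0$-certificate would be consistent with the answers. It is not that totality makes some consistent input a $0$-input.
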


This implies that for functions, the amortized cost cannot be separated from the single-instance deterministic cost by more than a quadratic factor. The following corollary follows from \cref{Thm_asymptotics} and \cref{Fact_rand_sep}:

\begin{Cor}\label{Cor_rand_square}
For any total function $f$ and any error $0 \leq \varepsilon < 1$,
\begin{equation*}
\liminf_{n \to \infty} \frac{R_\varepsilon(f^n)}{n} = \Omega(\sqrt{D(f)}).
\end{equation*}
\end{Cor}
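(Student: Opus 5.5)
Combining the lower bound in \eqref{align_Intro_2} of \cref{Thm_asymptotics} with \cref{Fact_rand_sep} gives this directly. For $\varepsilon \in (0,1)$, \eqref{align_Intro_2} says
\[
\liminf_{n\to\infty} \frac{R_\varepsilon(f^n)}{n} \geq (1-\varepsilon)\ER_0(f).
\]
\cref{Fact_rand_sep} gives $D(f) \leq \ER_0(f)^2$, so $\ER_0(f) \geq \sqrt{D(f)}$. Chaining the two yields
\[
\liminf_{n\to\infty} \frac{R_\varepsilon(f^n)}{n} \geq (1-\varepsilon)\sqrt{D(f)}.
\]
Since $\varepsilon<1$ is fixed, the factor $1-\varepsilon$ is a positive constant, and this is the $\Omega(\sqrt{D(f)})$ bound.

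The endpoint $\varepsilon = 0$ needs separate treatment, because \eqref{align_Intro_2} is stated only for $\varepsilon \in (0,1]$. I would use monotonicity of worst-case randomized complexity in the error parameter: $R_0(g) \geq R_{\varepsilon'}(g)$ for every $\varepsilon' > 0$, because a zero-error algorithm also meets any positive error bound. Applying this with $g = f^n$ and a fixed $\varepsilon' = 1/2$ gives
\[
\liminf_{n\to\infty} \frac{R_0(f^n)}{n} \geq \liminf_{n\to\infty} \frac{R_{1/2}(f^n)}{n} \geq \tfrac12 \ER_0(f) \geq \tfrac12\sqrt{D(f)}.
\]
A second route is to note that worst-case cost dominates expected cost, so $R_0(f^n) \geq \ER_0(f^n)$, and then use \eqref{eq_Thm1_2} with $\varepsilon=0$, which holds on all of $[0,1]$. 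This gives the limit $\ER_0(f) \geq \sqrt{D(f)}$ directly.

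The only point needing care is that \cref{Fact_rand_sep} applies to total functions, not relations. This matches the hypothesis of the corollary, and $f^n$ never has to be a function. The implied constant in $\Omega(\cdot)$ depends on $\varepsilon$ through $1-\varepsilon$, which is fine for fixed $\varepsilon<1$. No step is a real obstacle; the $\varepsilon=0$ case is the one spot where the quoted range of \cref{Thm_asymptotics} does not literally apply.
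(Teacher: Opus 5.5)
Your proof is correct and is exactly the paper's argument: chain the lower bound $(1-\varepsilon)\ER_0(f)\le\liminf_n R_\varepsilon(f^n)/n$ from \cref{Thm_asymptotics} with $D(f)\le\ER_0(f)^2$ from \cref{Fact_rand_sep}. Your separate treatment of $\varepsilon=0$ is valid, though not strictly needed. In the proof of \cref{Thm_asymptotics}, the worst-case lower bound is derived from $\ER_\varepsilon(f^n)\le R_\varepsilon(f^n)$ and \cref{Prop_Lower-bound_R}, and both already hold at $\varepsilon=0$.
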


However, for relations, no analogous polynomial bound holds, and the separation can be unbounded.
We establish this via the following existence result:

\begin{Fact}\label{Fact_f_rand}
There exists a relation $f \subseteq \Bset^m \times [m]$ and an error $\varepsilon$ such that
\begin{equation*}
\overline{R}_0(f) = \Theta(1) \Mand R_\varepsilon(f) = \Omega(m).
\end{equation*}
\end{Fact}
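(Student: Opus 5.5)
We need a relation $f \subseteq \Bset^m \times [m]$ for which some zero-error algorithm has constant expected cost, while every bounded-error algorithm needs $\Omega(m)$ queries in the worst case for a suitable constant $\varepsilon$. The natural candidate is a search problem whose witnesses are plentiful on most inputs but must always exist. We would try the following relation: on input $x \in \Bset^m$, output an index $i$ with $x_i = 1$, or any index if $x = 0^m$. This fails, because the all-zero input forces expected cost $m$ for a zero-error algorithm. So the search must be arranged so that a zero-error algorithm can always certify its answer cheaply in expectation.

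A better choice is a relation with a total solution guarantee. Take $m$ even and let $i \in [m]$ be a valid output iff $x_i \neq x_{i+1}$ (indices taken cyclically), or, if $x$ is constant, allow every $i$. The difficulty is that a constant input gives no witness, and certifying that $x$ is constant costs $m$ queries.

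We therefore switch to a pigeonhole-style relation. Interpret $x$ as describing $m/2$ bits and require the output to be an index $i\le m/2$ such that $x_i = x_{i+m/2}$, or an index $i>m/2$ such that $x_{i-m/2}\neq x_{i}$ when no such $i\le m/2$ exists. Every input has a valid answer, since for each $i\le m/2$ either the equal case holds or the unequal case holds. The zero-error algorithm picks a uniformly random $i\le m/2$ and queries $x_i$ and $x_{i+m/2}$. If they are equal it outputs $i$. If they differ it outputs $i+m/2$, which is valid because the relation includes the clause "or any index $j>m/2$ with $x_{j-m/2}\ne x_j$". This is trivial, so it is not a hard example.

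The lesson is that hardness for $R_\varepsilon$ must come from an adversary distribution on which witnesses are rare, while the zero-error algorithm still has constant expected cost on every input. These two requirements conflict for a fixed input, since both measures are worst-case over inputs. The gap must therefore come from the difference between expected and worst-case cost, not from error. Concretely, we want every input to admit a Las Vegas procedure with constant expected queries, but with a heavy-tailed running time. Truncating at $T$ queries should then give error about $\Pr[\text{time}>T]$, and we need this tail to decay so slowly that constant error requires $T=\Omega(m)$. Markov's inequality forbids this: expected cost $c$ gives error $\varepsilon$ at $T=c/\varepsilon$. So for constant $\varepsilon$ the statement needs $\varepsilon$ depending on $m$. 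The Fact allows $\varepsilon$ to be chosen, and we take $\varepsilon=\varepsilon_m = 1/m$ or smaller.

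Plan. Use the relation "output $i$ with $x_i=1$, where inputs are promised-free but the answer $i=m$ is also acceptable when $x_1=\dots=x_{m-1}$ \ldots". We would instead follow \cref{Fact_appendix_cont} and \cite{ABB+16}, whose construction gives $\ER_0(f)=O(1)$ with $R_{1/m}(f)=\Omega(m)$. The lower bound would go through Yao's principle applied to a hard distribution where $\Pr[\text{certificate found within } T]$ is at most $T/m$.

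The main obstacle is designing that relation. The first attempt would be a "pointer-chasing with random shortcuts" structure in which each input has an $\Omega(1)$-fraction of cheap witnesses, so the zero-error random sampler has constant expected cost. The same structure must also make the hard distribution force $\Omega(m)$ queries to reach error below $1/m$.
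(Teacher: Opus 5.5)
Your proposal has a genuine gap: it never commits to a relation and never proves a lower bound. The final plan defers to \cref{Fact_appendix_cont}, but that is the appendix version of the very statement you are proving, so that step is circular. Nor can \cite{ABB+16} supply the example. It concerns total Boolean functions, for which \cref{Fact_rand_sep} gives $R_\varepsilon(f)\le D(f)\le \overline{R}_0(f)^2$, so $\overline{R}_0(f)=O(1)$ forces $R_\varepsilon(f)=O(1)$. The lower-bound mechanism you sketch is also self-defeating. A zero-error deterministic tree halts exactly when its transcript certifies an output. So suppose every tree found a certificate within $T$ queries with probability at most $T/m$ under $\mu$. Then every zero-error algorithm would have expected cost at least $\sum_{T=0}^{m-1}(1-T/m)=(m+1)/2$ under $\mu$, hence on some input, which gives $\overline{R}_0(f)\ge (m+1)/2$.

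The paper's proof rests on two ideas you are missing.

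\emph{Restrict the domain.} The preliminaries let the relation be total on a subset $\mathcal{X}\subseteq\Bset^m$. Your very first candidate then works once the bad inputs are excluded by a promise rather than patched. Note that patching $0^m$ alone is not enough, since strings with a single $1$ also force expected cost $\Omega(m)$. Take $B$ to be the strings with at least $m/2$ ones and $f(x)=\{i: x_i=1\}$. Sampling uniform coordinates until a $1$ is seen is zero-error and uses at most $2$ expected queries on every $x\in B$.

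\emph{Use exponentially small error.} Instead of a heavy-tailed distributional argument, take $\varepsilon<1/|B|$. For any algorithm in $[f,\varepsilon]$, a union bound over the $|B|$ inputs shows that some random seed is correct on all of $B$, so $R_\varepsilon(f)=D(f)$. The adversary that answers $0$ to every query then gives $D(f)\ge m/2$. Suppose a tree halts after $q\le m/2-1$ queries and outputs $i$. Either $x_i$ was already answered $0$, or you set $x_i=0$; set all other unqueried bits to $1$. The input then has at most $q+1\le m/2$ zeros, so it lies in $B$, and the output is wrong.

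Your Markov observation $R_\varepsilon(f)=O(\overline{R}_0(f)/\varepsilon)$ is correct and shows that $\varepsilon$ must be $O(1/m)$. However, $\varepsilon=1/m$ does not suffice for this relation: probing $\lceil\log_2 m\rceil$ random coordinates already gives $R_{1/m}(f)=O(\log m)$. It is the exponentially small error, which collapses $R_\varepsilon$ to $D$, that makes the argument work.
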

(The proof of \cref{Fact_f_rand} is provided in \cref{Appendix_Proof} for completeness.)
Applying the upper bound in \cref{Thm_asymptotics} to \cref{Fact_f_rand} yields an unbounded separation for relations, analogous to the distributional case:

\begin{Cor}\label{Cor_sep_relation}
There exists a total relation $f$ and an error $\varepsilon$ such that
\begin{equation*}
\limsup_{n \to \infty} \frac{R_\varepsilon(f^n)}{n} = O(1) \Mand R_\varepsilon(f) = \Omega(m).
\end{equation*}
\end{Cor}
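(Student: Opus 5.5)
The plan is to combine the upper bound in \eqref{align_Intro_2} of \cref{Thm_asymptotics} with the relation supplied by \cref{Fact_f_rand}. Let $f \subseteq \Bset^m \times [m]$ and $\varepsilon$ be as in \cref{Fact_f_rand}, so that $\ER_0(f) = \Theta(1)$ and $R_\varepsilon(f) = \Omega(m)$. The second half of the corollary, $R_\varepsilon(f) = \Omega(m)$, is then exactly what \cref{Fact_f_rand} provides, and nothing further is needed for it.

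For the first half, I would check that the hypotheses of the worst-case part of \cref{Thm_asymptotics} hold: $f$ must be a total relation and $\varepsilon \in (0,1]$. Totality holds because every $x \in \Bset^m$ must have at least one valid output $z \in [m]$; this is needed anyway for $\ER_0(f)$ to be finite, and I would confirm it from the construction in \cref{Appendix_Proof}. Positivity of $\varepsilon$ requires a little care. If the fact is stated with $\varepsilon = 0$, then $R_\varepsilon(f)$ would be at least $\ER_0(f)$, and there is no contradiction in that. Still, I would ensure the fact is used with some fixed constant $\varepsilon \in (0,1)$, which the appendix proof presumably gives. Granting these hypotheses, \eqref{align_Intro_2} yields
\[
\limsup_{n\to\infty} \frac{R_\varepsilon(f^n)}{n} \le \ER_0(f) = O(1).
\]

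The only real obstacle is bookkeeping. The $\varepsilon$ in the corollary must be the same in both halves, and it must lie in $(0,1]$, so I would fix the single $\varepsilon$ produced by \cref{Fact_f_rand} and apply \cref{Thm_asymptotics} with that value. The $O(1)$ bound is uniform in $m$ because the constant in $\ER_0(f) = \Theta(1)$ does not depend on $m$. Read this way, the corollary is a statement about a family of relations indexed by $m$, exactly as in \cref{Fact_f_rand}.
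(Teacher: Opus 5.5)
Your proposal is correct and is the paper's own argument: take the relation from \cref{Fact_f_rand} and apply the worst-case upper bound $\limsup_{n\to\infty} R_\varepsilon(f^n)/n \le \ER_0(f)$ from \cref{Thm_asymptotics}, which requires only $\varepsilon\in(0,1]$. Two small corrections to your side remarks, neither of which affects the argument: $\varepsilon$ cannot be a fixed constant, since Markov truncation of a zero-error algorithm gives $R_\varepsilon(f)=O(\ER_0(f)/\varepsilon)$, and the appendix in fact takes $\varepsilon<1/|B|$, which is exponentially small in $m$; also, totality holds only on the restricted domain $B$ of strings with at least half ones, not on all of $\Bset^m$ — and the resulting bound $\ER_0(f)\le 2$ on the limsup does not depend on $\varepsilon$.
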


While similar separations have been explicitly proven in communication complexity (e.g., Ref.~\cite{FKNN95}), to the best of our knowledge, these explicit separations in query complexity had not been previously addressed.
\subsection{Proof technique}
We first explain the proof of \cref{Thm_asymptotics} and then describe the
additional ideas used for the OR-composition result.  Since the worst-case
bounds in \eqref{align_Intro_1} and \eqref{align_Intro_2} follow once the
expected-cost behavior is established, we focus first on
\eqref{eq_Thm1_1} and \eqref{eq_Thm1_2}.

\subsubsection*{Upper bound strategy}
To establish the upper bounds, we use parallel repetition and truncation. For the expected-cost upper bound, we run $n$ independent copies of an optimal zero-error expected algorithm with probability $1-\varepsilon$ and otherwise output without making queries. For the worst-case upper bound, we run the repeated zero-error algorithm but terminate once a global query budget is exceeded. Chebyshev's inequality controls the probability that this truncation changes the output.

\subsubsection*{Lower bound strategy}
The proof of the lower bound for distributional expected complexity,
\begin{equation}\label{eq_tech_lower}
\liminf_{n \to \infty} \frac{\overline{D}_\varepsilon^{\mu^n}(f^n)}{n} \geq \overline{D}^\mu_0(f) - \varepsilon \ER_0(f),
\end{equation}
is more involved. We employ a coordinate-wise simulation argument: given an algorithm $\mathcal{A}$ for $f^n$, we construct an algorithm for a single instance $f$ by embedding that instance into a ``good'' coordinate $i^*$ of $\mathcal{A}$.

A technical challenge arises because the simulated single-instance algorithm cannot simultaneously have vanishing error $o(1)$, no aborts, and cost roughly $\overline{D}_\varepsilon^{\mu^n}(f^n)/n$. If such an algorithm existed, it would imply the stronger lower bound $\overline{D}^\mu_0(f)$, contradicting the upper bound $(1-\varepsilon)\overline{D}^\mu_0(f)$ in general.

The simulation instead produces a single-copy algorithm with vanishing error, expected cost roughly $\overline{D}_\varepsilon^{\mu^n}(f^n)/n$, and abort probability at most $\varepsilon$. The abort event is necessary because the simulator can verify whether earlier generated coordinates were solved correctly and can abort when one of them fails.

To relate this algorithm with aborts back to standard zero-error complexity, we use two ingredients. First, \cref{Lem_abort} shows that abort probability $\delta$ can be charged against $\ER_0(f)$, yielding a bound of the form
\[
\overline{D}^{\mu}_{\eta,0}(f)-\delta \ER_0(f)
\leq \overline{D}^{\mu}_{\eta,\delta}(f).
\]
The appearance of $\ER_0(f)$ is essential because the input distribution conditioned on aborting need not be $\mu$. Second, \cref{Lem_conti_0_CD} supplies the continuity needed to take the simulation error $\eta$ to zero. Together these tools imply \eqref{eq_tech_lower}. Abort-based arguments also appear in Ref.~\cite{BB19}.

\paragraph{Zero-error Yao's minimax theorem.}
Finally, to extend the distributional lower bound to randomized expected complexity, we require a transition principle analogous to Yao's minimax theorem. The standard bounded-error version does not directly handle expected query cost in the form needed here. We therefore prove a zero-error expected version:
\begin{equation}
\overline{R}_0(f) = \sup_{\mu} \overline{D}^\mu_0(f).
\end{equation}
This minimax relation allows us to choose a distribution $\mu$ for which $\overline{D}^\mu_0(f)$ is arbitrarily close to $\ER_0(f)$, turning the lower bound \eqref{eq_tech_lower} into $(1-\varepsilon)\ER_0(f)$.

\subsubsection*{One-sided OR-composition strategy}
The lower bound in \cref{thm:or-composition} also uses a coordinate-embedding
argument, but it does not require aborts or a limiting argument.  We embed a
single input into a uniformly random coordinate and fill every other
coordinate using an arbitrary distribution supported on zero-inputs.
A minimax theorem over such distributions then gives the exact lower bound
\[
\ER_{\mathrm{FP}=0,\mathrm{FN}\leq\varepsilon}
(\operatorname{OR}_n\circ f)
\geq n c_\varepsilon(f)
\]
for every $n$.

For the upper bound, we first choose a random set of $n^{2/3}$ coordinates
and evaluate those coordinates exactly.  If this inspection finds no
one-input, we run independent copies of a nearly optimal single-copy
one-sided algorithm on the remaining coordinates.  Random inspection detects
inputs containing many one-input blocks with high probability, while inputs
containing few one-input blocks contribute only a sublinear additional cost.
This gives an expected query cost of
\[
n c_\varepsilon(f)+o(n).
\]
For the worst-case result, we apply the same construction with a slightly
smaller false-negative parameter and truncate it at an appropriate query
budget.  Concentration and the continuity of $c_\varepsilon(f)$ in
$\varepsilon$ show that the truncation changes the false-negative probability
by only $o(1)$.

\subsection{Open problems}
We conclude by listing several open problems arising from our work.

First, \cref{Thm_asymptotics} proves that
\[
\lim_{n\to\infty}\frac{\ER_\varepsilon(f^n)}{n}=(1-\varepsilon)\ER_0(f).
\]
In contrast, for the per-distribution expected measure and for worst-case randomized and distributional measures, we currently obtain only liminf/limsup bounds. It remains open whether the corresponding amortized limits always exist, and, if they do, whether they are characterized by one of the endpoint quantities in \cref{Thm_asymptotics} or by a different single-copy complexity measure.

Second, while the direct sum property for standard \emph{quantum} query complexity is well understood---specifically, $Q_\varepsilon(f^n) = \Theta(n \cdot Q_\varepsilon(f))$ as shown in Refs.~\cite{ACLT10,AMRR11}---the behavior of \emph{expected} quantum query complexity remains unresolved. In particular, determining the exact value of the limit
\[
\lim_{n \to \infty} \frac{\overline{Q}_\varepsilon(f^n)}{n}
\]
remains an open question.

Third, an analogous question persists in communication complexity. It remains open to determine the exact value of the limit $\lim_{n \to \infty} \overline{CC}_\varepsilon^{\mu^n}(f^n)/n$ for the expected communication complexity of a relation $f$ under global error. When the success criterion is relaxed to allow \emph{coordinate-wise error}, the amortized cost is known to converge to the information complexity of $f$~\cite{BR11,Bra17}; the global-error analogue remains less well understood.
Finally, \cref{thm:or-composition} gives an exact characterization for
$\operatorname{OR}_n\circ f$ under one-sided error.  It remains open to obtain
analogous characterizations for more general outer functions, such as
$\mathsf{MAJ}_n$, and for OR compositions under two-sided error. 
\subsection*{Organization}
The remainder of this paper is organized as follows.
\cref{sec_Preliminaries} introduces the necessary preliminaries, notations, and basic properties of query algorithms with aborts used throughout the paper.
\cref{sec_continuity} establishes the continuity properties of expected distributional query complexity needed for the simulation argument.
\cref{sec_lower-bounds} derives the lower bounds and proves the zero-error Yao's minimax theorem.
\cref{sec_upper-bounds} provides the corresponding upper bounds based on parallel repetition and truncation.
\cref{sec_Main-results} synthesizes these results to prove
\cref{Thm_asymptotics}.
\cref{sec:or-composition} establishes the one-sided amortization theorem for
$\operatorname{OR}_n\circ f$, including both expected-cost and worst-case
characterizations.
Finally, \cref{Appendix_Proof} proves the existence of functions and relations
exhibiting the separations stated in \cref{Fact_f_dist,Fact_f_rand}.

\section{Preliminaries}\label{sec_Preliminaries}

\subsection{Query model and product relations}

We use the standard decision-tree model of query complexity~\cite{BdW02}.
A deterministic query algorithm is a decision tree, and a randomized query
algorithm is a distribution over deterministic query algorithms.

Let $f\subseteq\mathcal{X}\times\mathcal{Z}$ be a total relation, where
$\mathcal{X}\subseteq\Bset^m$ and $\mathcal{Z}$ is finite. Thus, for every
$x\in\mathcal{X}$, the set
\[
f(x):=\{z\in\mathcal{Z}:(x,z)\in f\}
\]
is nonempty. An algorithm succeeds on input $x$ if it outputs an element of
$f(x)$.

For $n\in\mathbb{N}$, the $n$-fold product relation
$f^n\subseteq\mathcal{X}^n\times\mathcal{Z}^n$ is defined by
\[
f^n(x_1,\ldots,x_n)
:=f(x_1)\times\cdots\times f(x_n).
\]
(Here the product $\times$ denotes Cartesian product.)
Accordingly, an algorithm computes $f^n$ on $(x_1,\ldots,x_n)$ if it outputs
$(z_1,\ldots,z_n)$ with $z_i\in f(x_i)$ for every $i\in[n]$.
For a distribution $\mu$ on $\mathcal{X}$, we write $\mu^n$ for its $n$-fold
product distribution on $\mathcal{X}^n$.

For a possibly randomized query algorithm $\calA$, let
$|\calA_r(x)|$ denote the number of queries made by $\calA$ on input $x$ when
its internal randomness is fixed to $r$. We define
\[
\Cost(\calA):=\max_{x,r}|\calA_r(x)|.
\]
For a distributional worst-case quantity under an input distribution $\mu$,
the maximum over $x$ is restricted to $\operatorname{supp}(\mu)$; for a
randomized worst-case quantity, it is taken over all inputs.

\subsection{Error and abort conventions}

An algorithm may output either an element of $\mathcal{Z}$ or a distinguished
abort symbol. For an input distribution $\mu$, let
$[f,\mu,\varepsilon,\delta]$ be the set of possibly randomized query algorithms
$\calA$ satisfying
\begin{align*}
\Pr_{x\sim\mu,r}
  [\calA_r(x)\neq\mathsf{abort}\ \text{and}\ \calA_r(x)\notin f(x)]
  &\leq\varepsilon,\\
\Pr_{x\sim\mu,r}[\calA_r(x)=\mathsf{abort}]
  &\leq\delta.
\end{align*}
The first probability is the distributional error probability. In particular,
an abort is not counted as an error.

Similarly, let $[f,\varepsilon,\delta]$ be the set of randomized query
algorithms $\calA$ such that, for every $x\in\mathcal{X}$,
\begin{align*}
\Pr_r
  [\calA_r(x)\neq\mathsf{abort}\ \text{and}\ \calA_r(x)\notin f(x)]
  &\leq\varepsilon,\\
\Pr_r[\calA_r(x)=\mathsf{abort}]
  &\leq\delta.
\end{align*}
We write
\[
[f,\mu,\varepsilon]:=[f,\mu,\varepsilon,0]
\quad\text{and}\quad
[f,\varepsilon]:=[f,\varepsilon,0].
\]
Thus, the algorithms in $[f,\varepsilon]$ never abort and have error at most
$\varepsilon$ on every input.

The success, error, and abort events are pairwise disjoint. The success
probability under $\mu$ is
\[
\Pr_{x\sim\mu,r}
  [\calA_r(x)\neq\mathsf{abort}\ \text{and}\ \calA_r(x)\in f(x)],
\]
and the analogous definition applies to the success probability on a fixed
input.

\subsection{Complexity measures}

Let $D^\mu_\varepsilon(f)$ denote the minimum worst-case query cost of a
deterministic decision tree whose error under $\mu$ is at most $\varepsilon$.
Let $R_\varepsilon(f)$ denote the minimum worst-case query cost of a randomized
query algorithm whose error is at most $\varepsilon$ on every input. Algorithms
in these definitions do not abort.

For a possibly randomized query algorithm $\calA$, its expected cost under
$\mu$ is
\[
\E_{\mu,r}[|\calA|]
:=\sum_{x\in\mathcal{X}}\mu(x)\E_r[|\calA_r(x)|].
\]
We omit the subscripts when they are clear from the context. We define
\begin{equation*}
\overline{D}^{\mu}_{\varepsilon,\delta}(f):=\inf_{\calA\in[f,\mu,\varepsilon,\delta]}
  \E_{\mu,r}[|\calA|], \quad
\overline{R}_{\varepsilon}(f):=\inf_{\calA\in[f,\varepsilon]}
  \max_{x\in\mathcal{X}}\E_r[|\calA_r(x)|].
\end{equation*}
We abbreviate
$\overline{D}^{\mu}_{\varepsilon,0}(f)$ as
$\overline{D}^{\mu}_{\varepsilon}(f)$.

In the definition of $\overline{D}^{\mu}_{\varepsilon,\delta}(f)$, both the
error probability and the expected cost are taken over the input
$x\sim\mu$ and the internal randomness of the algorithm. In particular,
$\overline{D}^{\mu}_{\varepsilon,\delta}(f)$ allows randomized decision trees;
this quantity is sometimes denoted
$\overline{R}^{\mu}_{\varepsilon,\delta}(f)$. At positive error, it may differ
from the corresponding minimum over deterministic decision trees because
randomization can trade error probability for expected cost. At zero error,
every deterministic decision tree assigned positive probability by a
zero-error randomized algorithm has zero error under $\mu$. Since expected cost
is linear in the distribution over decision trees, randomization does not
change the distributional optimum at zero error.

The infima in the definitions of expected complexity need not be attained.
Whenever an optimal expected-cost algorithm is mentioned, it may be replaced
by an algorithm whose expected cost is arbitrarily close to the corresponding
infimum.

We also define the standard deviation of the query cost by
\begin{equation*}
\sigma(\calA,\mu):=\sqrt{\E_{\mu,r}[|\calA|^2]-\E_{\mu,r}[|\calA|]^2}, \quad
\E_{\mu,r}[|\calA|^2]:=\sum_{x\in\mathcal{X}}\mu(x)
  \E_r[|\calA_r(x)|^2].
\end{equation*}

Table~\ref{tab:notation} summarizes the notation used throughout the paper.
\begin{table}[t]
\centering
\renewcommand{\arraystretch}{1.2}
\begin{tabular}{p{0.30\textwidth}p{0.62\textwidth}}
\hline
\textbf{Notation} & \textbf{Meaning} \\
\hline

$|\calA_r(x)|$
& Number of queries made by $\calA$ on input $x$ with randomness $r$ \\

$\Cost(\calA)$
& Worst-case number of queries over the relevant inputs and randomness \\

$\E_{\mu,r}[|\calA|]$
& Expected number of queries under the input distribution $\mu$ \\

$\sigma(\calA,\mu)$
& Standard deviation of the number of queries under $\mu$ \\

$[f,\mu,\varepsilon,\delta]$
& Algorithms with distributional error $\leq\varepsilon$ and abort probability
  $\leq\delta$ under $\mu$ \\

$[f,\varepsilon,\delta]$
& Algorithms with error $\leq\varepsilon$ and abort probability $\leq\delta$
  on every input \\

$D^\mu_\varepsilon(f)$
& Distributional query complexity with worst-case cost \\

$R_\varepsilon(f)$
& Randomized query complexity with worst-case cost \\

$\overline{D}^{\mu}_{\varepsilon,\delta}(f)$
& Distributional expected query complexity with error $\varepsilon$ and abort
  probability $\delta$ \\

$\overline{D}^{\mu}_{\varepsilon}(f)$
& Distributional expected query complexity with error $\varepsilon$ and no
  aborts \\

$\overline{R}_\varepsilon(f)$
& Randomized expected query complexity with worst-case error $\varepsilon$ \\

\hline
\end{tabular}
\caption{Notation used throughout the paper.}
\label{tab:notation}
\end{table}

\subsection{Basic tools}

\begin{Fact}\label{Fact_conti_expe}
For any query algorithm $\calA$ on $\mathcal{X}$, the maximum values $\max_{\mu} \E_\mu[|\calA|]$ and $\max_\mu \sigma(\calA, \mu)$ exist.
\end{Fact}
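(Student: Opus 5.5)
The plan is to reduce both claims to continuity of finitely many functions on a compact set. Fix a query algorithm $\calA$ on $\mathcal{X}$. As in the preliminaries, I will treat $\calA$ as a distribution over finitely many decision trees, so that $r$ ranges over a finite set, or at least $\E_r[|\calA_r(x)|]$ and $\E_r[|\calA_r(x)|^2]$ are well-defined numbers in $[0,\infty]$. Since $\mathcal{X}\subseteq\Bset^m$ is finite, each $x$ has a finite query count on every branch. I will first check that these per-input moments are finite. For a deterministic tree on $m$ variables, the depth is at most $m$ without loss of generality, since repeated queries can be pruned. For a randomized mixture, $|\calA_r(x)|\le m$ whenever the trees never re-query a variable. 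If re-queries are allowed, I will simply assume the per-input moments are finite, as is implicit whenever $\E_\mu[|\calA|]$ is considered; otherwise both sides equal $+\infty$ and the statement is trivial.

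Set $a_x:=\E_r[|\calA_r(x)|]$ and $b_x:=\E_r[|\calA_r(x)|^2]$. Then $\E_\mu[|\calA|]=\sum_x\mu(x)a_x$ is a linear function of $\mu$ on the probability simplex $\Delta(\mathcal{X})$. This simplex is a compact subset of $\mathbb{R}^{|\mathcal{X}|}$ because $\mathcal{X}$ is finite. A linear function is continuous, so by the extreme value theorem the maximum exists. Concretely, it equals $\max_x a_x$ and is attained at a point mass.

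For the standard deviation, $\sigma(\calA,\mu)^2=\sum_x\mu(x)b_x-\bigl(\sum_x\mu(x)a_x\bigr)^2$ is a polynomial in the coordinates of $\mu$, hence continuous on $\Delta(\mathcal{X})$. It is nonnegative by Jensen's inequality, since it is the variance of $|\calA_r(x)|$ under the joint law of $(x,r)$. The square root is continuous on $[0,\infty)$, so $\sigma(\calA,\cdot)$ is continuous on a compact set and attains its maximum.

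The main obstacle is not the compactness argument itself but justifying the finiteness of $a_x$ and $b_x$, which depends on the model conventions. I will handle it by the pruning remark (depth at most $m$) and otherwise fall back on the finiteness assumption. In the infinite case the maximum still exists trivially in the extended sense.
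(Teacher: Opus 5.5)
Your proposal is correct and follows the paper's argument: both $\E_\mu[|\calA|]$ and $\sigma(\calA,\mu)$ are continuous in $\mu$ on the compact probability simplex over the finite set $\mathcal{X}$, so the extreme value theorem gives the maxima. You also fill in details the paper leaves implicit: finiteness of the per-input moments via pruning to depth at most $m$ (the same convention the paper invokes in the proof of \cref{Lem_algo_CD}), and nonnegativity of the variance before taking the square root.
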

\begin{proof}
Both $\E_\mu[|\calA|]$ and $\sigma(\calA, \mu)$ are continuous on $\mu$, as follows directly from their finite-sum expressions. Thus, the maximum values exist, following the standard fact that any continuous function on a compact set achieves a finite maximum.

\end{proof}

The lower-bound simulation later produces a single-copy algorithm with small error and nonzero abort probability. The following lemma explains how abort probability is charged in terms of $\ER_0(f)$.

\begin{Lem}\label{Lem_abort}
\begin{equation*}
\ED_{\varepsilon, 0}(f) - \delta \ER_{0}(f)\leq \ED_{\varepsilon, \delta}(f).
\end{equation*}
\end{Lem}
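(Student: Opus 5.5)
The plan is to convert an algorithm with aborts into one without aborts by replacing every abort with a fallback zero-error algorithm. Fix $\gamma>0$ and let $\calA\in[f,\mu,\varepsilon,\delta]$ with $\E_{\mu,r}[|\calA|]\le \ED_{\varepsilon,\delta}(f)+\gamma$. Let $\calB\in[f,0]$ be a zero-error randomized algorithm with $\max_x \E_r[|\calB_r(x)|]\le \ER_0(f)+\gamma$. Define $\calA'$ as follows: run $\calA$ on $x$; if it outputs some $z\neq\mathsf{abort}$, output $z$; if it aborts, run $\calB$ on $x$ from scratch with fresh independent randomness and output its answer. Re-querying variables already seen is allowed and only increases the cost, so the bound below is valid.

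The next step is to check that $\calA'\in[f,\mu,\varepsilon,0]$. It never aborts, since $\calB$ never aborts. Its error event is contained in the event that $\calA$ does not abort and errs, because on aborts $\calB$ is always correct. So the error is at most $\varepsilon$.

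For the cost, write $\mathsf{Ab}$ for the abort event of $\calA$. Then
\[
\E_{\mu,r}[|\calA'|]=\E_{\mu,r}[|\calA|]+\sum_x\mu(x)\Pr_r[\calA_r(x)=\mathsf{abort}]\;\E_{r'}[|\calB_{r'}(x)|].
\]
This uses the independence of $\calB$'s randomness from $\calA$'s run given $x$. Bounding $\E_{r'}[|\calB_{r'}(x)|]\le \ER_0(f)+\gamma$ uniformly in $x$ gives an extra cost of at most $\Pr[\mathsf{Ab}](\ER_0(f)+\gamma)\le\delta(\ER_0(f)+\gamma)$.

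The key point, and the only real subtlety, is that the input distribution conditioned on $\mathsf{Ab}$ is not $\mu$. We therefore cannot charge $\ED_0(f)$ here and must use the worst-case-input guarantee of $\ER_0(f)$. That is exactly why $\ER_0$ appears in the statement.

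Combining these, $\ED_{\varepsilon,0}(f)\le \E[|\calA'|]\le \ED_{\varepsilon,\delta}(f)+\gamma+\delta(\ER_0(f)+\gamma)$. Letting $\gamma\to 0$ gives the lemma. If $\ER_0(f)$ is infinite the statement is trivial, and for total $f$ it is finite anyway, since querying all bits is a zero-error algorithm.
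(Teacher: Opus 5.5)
Your proposal is correct and is essentially the paper's proof. You take a near-optimal algorithm in $[f,\mu,\varepsilon,\delta]$, replace each abort with a run of a near-optimal zero-error randomized algorithm, and charge the extra cost $\delta(\overline{R}_0(f)+\gamma)$ via the worst-case-input guarantee, precisely because the abort-conditioned input distribution need not be $\mu$. Your per-input cost decomposition with independent fallback randomness is just a slightly more explicit way of writing the bound the paper states through the conditional distribution $\nu$.
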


\begin{proof}
Let \(\calA_1\in[f,\mu,\varepsilon,\delta]\) be an algorithm with
\[
  \E_{\mu}[|\calA_1|]\le
  \overline D^\mu_{\varepsilon,\delta}(f)+\eta .
\]
If \(\calA_1\) aborts with probability \(0\), there is nothing to prove.
Otherwise, let \(\nu\) denote the distribution of the input conditioned
on the event that \(\calA_1\) aborts.

Now choose a zero-error randomized algorithm \(\calA_2\in[f,0]\) satisfying
\[
  \max_x \E[|\calA_2(x)|]\le \overline R_0(f)+\eta .
\]
In particular,
\[
  \E_{x\sim\nu}[|\calA_2(x)|]\le \overline R_0(f)+\eta .
\]
Run \(\calA_1\), and whenever \(\calA_1\) aborts, continue by running \(\calA_2\).
The resulting algorithm has no aborts and has error at most
\(\varepsilon\), since \(\calA_2\) is zero-error. Its expected cost is at most
\[
  \overline D^\mu_{\varepsilon,\delta}(f)+\eta
  + \delta(\overline R_0(f)+\eta).
\]
Letting \(\eta\to0\) gives
\[
  \overline D^\mu_{\varepsilon,0}(f)-\delta\overline R_0(f)
  \le
  \overline D^\mu_{\varepsilon,\delta}(f).
\]

The reason \(\overline R_0(f)\) appears, rather than
\(\overline D^\mu_0(f)\), is that the distribution \(\nu\) of inputs
conditioned on aborting is not controlled and need not equal \(\mu\).
The continuation algorithm must therefore work with small expected cost
for every possible abort-conditioned distribution.
\end{proof}

\section{Continuity}\label{sec_continuity}
\cref{sec_continuity} proves continuity properties of the quantity $\overline{D}^\mu_{\varepsilon,\delta}(f)$ with respect to the error parameter $\varepsilon$.
The main theorem only uses the zero-error continuity statement, \cref{Lem_conti_0_CD}; \cref{Lem_conti_CD} records the analogous positive-error continuity for completeness.
\begin{Lem}\label{Lem_conti_CD}
For any $\varepsilon > 0$ and $\delta \geq 0$,
$\lim_{\rho \to \varepsilon}\ED_{\rho, \delta}(f)
=\ED_{\varepsilon, \delta}(f).$
\end{Lem}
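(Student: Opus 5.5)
We prove continuity from both sides by exploiting monotonicity and convexity in the error parameter. The function $g(\rho):=\ED_{\rho,\delta}(f)$ is nonincreasing in $\rho$, since $[f,\mu,\rho,\delta]\subseteq[f,\mu,\rho',\delta]$ whenever $\rho\le\rho'$. It is also bounded: $0\le g(\rho)\le g(0)\le\ED_{0,0}(f)\le m$.

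The main tool is an interpolation (mixing) argument. Fix $\rho_1<\rho_2$, $\lambda\in[0,1]$, and algorithms $\calA_1\in[f,\mu,\rho_1,\delta]$ and $\calA_2\in[f,\mu,\rho_2,\delta]$ whose expected costs are within $\eta$ of $g(\rho_1)$ and $g(\rho_2)$. Consider the randomized algorithm that runs $\calA_1$ with probability $\lambda$ and $\calA_2$ otherwise. Its error is at most $\lambda\rho_1+(1-\lambda)\rho_2$, its abort probability is at most $\delta$, and its expected cost under $\mu$ is the corresponding convex combination. Letting $\eta\to0$ gives
\[
g(\lambda\rho_1+(1-\lambda)\rho_2)\le\lambda g(\rho_1)+(1-\lambda)g(\rho_2).
\]
So $g$ is convex on $[0,1]$. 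Randomization is allowed here because the measure permits randomized decision trees, as remarked in the preliminaries.

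Continuity then follows from a standard fact: a finite convex function on $[0,1]$ is continuous on the open interval $(0,1)$. At the endpoint $\varepsilon=1$, $g$ is nonincreasing and convex, so $\lim_{\rho\to1^-}g(\rho)$ exists and is at least $g(1)$. To get the matching bound, we use convexity directly. For $\rho=1-t$ with small $t>0$, mixing the near-optimal algorithm for error $1$ with the one for error $0$ gives
\[
g(1-t)\le(1-t)g(1)+t\,g(0)\le g(1)+t\,m,
\]
so $g(1-t)\to g(1)$. For $\varepsilon>1$ the function is constant, so the claim is trivial there.

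To keep the write-up self-contained, I would also give the interior case explicitly, again using mixing with the endpoints. For $\rho\downarrow\varepsilon$, monotonicity gives $g(\rho)\le g(\varepsilon)$. For the other direction, write $\varepsilon=\lambda\rho+(1-\lambda)\cdot0$ with $\lambda=\varepsilon/\rho$. Convexity gives $g(\varepsilon)\le\lambda g(\rho)+(1-\lambda)g(0)$. Since $g(0)\le m$ and $\lambda\to1$, this yields $\liminf_{\rho\downarrow\varepsilon}g(\rho)\ge g(\varepsilon)$. This step uses $\varepsilon>0$, which is exactly why the lemma excludes $\varepsilon=0$.

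For $\rho\uparrow\varepsilon$, monotonicity gives $g(\rho)\ge g(\varepsilon)$. Writing $\rho$ as a mixture of $\varepsilon$ and $0$ does not help here, since it only bounds $g$ at points below $\varepsilon$ in terms of $g(0)$. Instead, write $\rho=\lambda\varepsilon+(1-\lambda)\cdot1$, which is only possible when $\rho\ge\varepsilon$. The cleaner route is therefore: if $\varepsilon<1$, pick $\varepsilon'\in(\varepsilon,1]$ and use convexity on $[\rho,\varepsilon']$. The slope bound
\[
g(\rho)-g(\varepsilon)\le\frac{\varepsilon-\rho}{\varepsilon'-\varepsilon}\,\bigl(g(\varepsilon)-g(\varepsilon')\bigr)\le\frac{\varepsilon-\rho}{\varepsilon'-\varepsilon}\,m
\]
tends to $0$ as $\rho\uparrow\varepsilon$. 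At $\varepsilon=1$, this left-limit case is exactly the endpoint argument above.

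The main obstacle is verifying that the mixture really lies in $[f,\mu,\cdot,\delta]$. Error and abort probabilities are both averaged over the input $x\sim\mu$ and the internal randomness, so both are linear under mixing, and the abort constraint $\delta$ is preserved. The rest is the elementary convexity bookkeeping above.
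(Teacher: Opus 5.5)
Your main line is correct, and it differs from the paper mainly in packaging. The paper proves only the case $\rho\searrow\varepsilon$ explicitly and says the other side is similar. It mixes a near-optimal $\rho$-error algorithm, taken with probability $\varepsilon/(2\rho-\varepsilon)\to1$, with an $\varepsilon/2$-error algorithm, so that the mixture has error exactly $\varepsilon$. You instead turn the same mixing into convexity of $g(\rho)=\overline{D}^\mu_{\rho,\delta}(f)$. You then invoke continuity of finite convex functions on the interior of their domain, and handle $\varepsilon=1$ and the constant range $\varepsilon\ge1$ separately. This treats both one-sided limits uniformly and isolates a clean structural fact. The paper's version is more hands-on but leaves the left limit to the reader. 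Your explicit right-limit computation, anchoring at error $0$ rather than $\varepsilon/2$, is essentially the paper's argument. Your use of $\eta$-near-optimal algorithms is slightly more careful than the paper's appeal to an ``optimal'' algorithm.

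Your explicit treatment of $\rho\uparrow\varepsilon$, however, is wrong, and you dismissed the argument that works. For $\rho<\varepsilon<\varepsilon'$, convexity gives
\[
\frac{g(\varepsilon)-g(\rho)}{\varepsilon-\rho}\le\frac{g(\varepsilon')-g(\varepsilon)}{\varepsilon'-\varepsilon},
\quad\text{i.e.}\quad
g(\rho)-g(\varepsilon)\ge\frac{\varepsilon-\rho}{\varepsilon'-\varepsilon}\bigl(g(\varepsilon)-g(\varepsilon')\bigr).
\]
This is a lower bound on $g(\rho)-g(\varepsilon)$, the opposite of what you wrote. A point to the right of $\varepsilon$ only bounds from below how steeply a convex nonincreasing function rises to the left. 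For instance, $g(x)=(1-x)^2$ with $\varepsilon=1/2$, $\varepsilon'=1$, $\rho=2/5$ violates your inequality ($0.11$ on the left versus $0.05$ on the right).

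The needed upper bound requires a point to the left, and the mixture with $0$ that you rejected supplies it. Writing $\rho=\frac{\rho}{\varepsilon}\varepsilon+\bigl(1-\frac{\rho}{\varepsilon}\bigr)\cdot 0$ gives
\[
g(\rho)\le\frac{\rho}{\varepsilon}\,g(\varepsilon)+\Bigl(1-\frac{\rho}{\varepsilon}\Bigr)g(0)\le g(\varepsilon)+\frac{\varepsilon-\rho}{\varepsilon}\,m .
\]
Together with monotonicity, this gives the left limit, and it again uses $\varepsilon>0$. It is the same mixture you used correctly at $\varepsilon=1$, and it is what the paper's ``proved similarly'' amounts to. Your appeal to continuity of convex functions already covers $(0,1)$, so the proof as a whole stands. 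But the explicit left-limit paragraph should be replaced by this computation or deleted.
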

\begin{proof}
Since both limits $\rho \searrow \varepsilon$ and $\rho \nearrow \varepsilon$ are proved similarly, we only show the case $\rho \searrow \varepsilon$.
Take $\calA \in [f, \mu, \rho, \delta]$ and create a new algorithm $\tilde{\calA} \in [f, \mu, \varepsilon, \delta]$ as running $\calA$ w.p. $ \varepsilon' := \varepsilon / (2\rho - \varepsilon)$ and another algorithm $\calA' \in [f, \mu, \varepsilon/2, \delta]$ w.p. $1 - \varepsilon'$.
This construction interpolates between error $\rho$ and error $\varepsilon$ by randomization, while preserving expected cost linearly.
\par
The expectation of this algorithm satisfies
\begin{equation*}
\E[|\tilde{\calA}|]
= \varepsilon' \E[|\calA|] + (1- \varepsilon') \E[|\calA'|].
\end{equation*}
Let $\calA \in [f, \mu, \rho, \delta]$ be an optimal algorithm, i.e.,  $\E[|\calA|] = \ED_{\rho, \delta}(f)$.
Then the above equality implies
\begin{equation*}
\ED_{\varepsilon, \delta}(f)
\leq
\E[|\tilde{\calA}|]
=  \frac{\varepsilon}{2\rho - \varepsilon} \ED_{\rho, \delta}(f) + \left(1-\frac{\varepsilon}{2\rho - \varepsilon}\right) \E[|\calA'|].
\end{equation*}
In addition, $\ED_{\rho, \delta}(f) \leq \ED_{\varepsilon, \delta}(f)$ trivially holds.
Therefore, taking $\rho \searrow \varepsilon$ yields the desired statement.
\end{proof}

\begin{Lem}\label{Lem_conti_0_CD}
Suppose $\delta, \delta' \geq 0$. Then for any $f \subseteq \mathcal{X} \times \mathcal{Z}$,
\begin{equation*}
\ED_{\alpha, \delta+ \sqrt{\varepsilon} \delta'}(f) \leq \ED_{\varepsilon, \delta}(f) + \sqrt{\varepsilon} ~\ED_{\alpha/\sqrt{\varepsilon}, \delta'}(f)
\end{equation*}
holds for any positive $\varepsilon$ satisfying
$ \sqrt{\varepsilon} < \mu_\textsf{min} := \min_{x \in \operatorname{supp}(\mu)} \mu(x)$, and any $\alpha \in [0, \sqrt{\varepsilon})$.

In particular the case of $\alpha = \delta' = 0$ shows that $\ED_{\varepsilon, \delta}(f)$ is also continuous at $\varepsilon =0$.
\end{Lem}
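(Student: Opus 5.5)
The plan is to work tree by tree. Fix an algorithm $\calA\in[f,\mu,\varepsilon,\delta]$ with $\E_{\mu,r}[|\calA|]\le \ED_{\varepsilon,\delta}(f)+\eta$, and view it as a distribution over deterministic decision trees $T_r$. The abort outputs may themselves be randomized, which we absorb into $r$. For each fixed $r$, let
\[
e(r):=\Pr_{x\sim\mu}[T_r(x)\neq\mathsf{abort}\ \text{and}\ T_r(x)\notin f(x)]
\]
be the distributional error of the deterministic tree $T_r$. Then $\E_r[e(r)]\le\varepsilon$. The key observation is a \emph{granularity} fact: $e(r)$ is a sum of masses $\mu(x)$ over the inputs $x\in\operatorname{supp}(\mu)$ on which $T_r$ errs. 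Hence either $e(r)=0$ or $e(r)\ge\mu_{\mathsf{min}}>\sqrt{\varepsilon}$.

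Call $r$ \emph{bad} if $e(r)\ge\sqrt{\varepsilon}$ and \emph{good} otherwise. By Markov's inequality,
\[
q:=\Pr_r[r\text{ bad}]\le \varepsilon/\sqrt{\varepsilon}=\sqrt{\varepsilon}.
\]
By the granularity fact, every good tree makes no error at all on $\operatorname{supp}(\mu)$.

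Next take $\calB\in[f,\mu,\alpha/\sqrt{\varepsilon},\delta']$ with $\E[|\calB|]\le \ED_{\alpha/\sqrt{\varepsilon},\delta'}(f)+\eta$. The hypothesis $\alpha<\sqrt{\varepsilon}$ keeps this error parameter below $1$, so the quantity is meaningful. Define $\tilde\calA$ as follows: sample $r$; if $r$ is good, run $T_r$; if $r$ is bad, run $\calB$ with fresh randomness. The decision of which branch to take depends only on $r$, which is independent of $x\sim\mu$. So conditioned on the bad event the input is still distributed as $\mu$, and $\calB$'s distributional guarantees apply there. This independence is the point to check carefully, in contrast to \cref{Lem_abort}, where conditioning on the abort event distorted the input distribution.

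The three estimates then follow directly.
\begin{itemize}
\item Error: good trees contribute $0$, and the bad branch contributes at most $q\cdot\alpha/\sqrt{\varepsilon}\le\alpha$.
\item Abort probability: the good branch contributes at most the total abort probability of $\calA$, so at most $\delta$. The bad branch contributes at most $q\delta'\le\sqrt{\varepsilon}\,\delta'$.
\item Expected cost: the good part is at most $\E[|\calA|]$, since query counts are nonnegative and we only drop the bad trees' costs. The bad part adds $q\,\E[|\calB|]\le\sqrt{\varepsilon}\,(\ED_{\alpha/\sqrt{\varepsilon},\delta'}(f)+\eta)$.
\end{itemize}
Thus $\tilde\calA\in[f,\mu,\alpha,\delta+\sqrt{\varepsilon}\delta']$. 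Letting $\eta\to0$ gives the claimed inequality.

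For the continuity remark, set $\alpha=\delta'=0$. This gives
\[
\ED_{0,\delta}(f)\le\ED_{\varepsilon,\delta}(f)+\sqrt{\varepsilon}\,\ED_{0}(f).
\]
Here $\ED_0(f)$ is finite, at most $m$. Combined with the trivial monotonicity $\ED_{\varepsilon,\delta}(f)\le\ED_{0,\delta}(f)$, this yields $\ED_{\varepsilon,\delta}(f)\to\ED_{0,\delta}(f)$ as $\varepsilon\searrow0$.

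I expect the main obstacle to be the granularity step together with the handling of randomness. One must insist that the case split is made on the full random string $r$, including any randomness used to produce outputs or aborts at leaves. Only then is each $T_r$ genuinely deterministic, and only then does $e(r)<\mu_{\mathsf{min}}$ force $e(r)=0$.
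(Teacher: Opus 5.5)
Your proof is correct and follows essentially the same route as the paper. You split on whether the sampled deterministic tree errs somewhere on $\operatorname{supp}(\mu)$, replace bad trees by a near-optimal algorithm in $[f,\mu,\alpha/\sqrt{\varepsilon},\delta']$, and use that the branch choice is independent of $x\sim\mu$ to get the error, abort, and cost bounds; the only cosmetic difference is that you bound the bad-tree probability by Markov at threshold $\sqrt{\varepsilon}$ plus granularity, whereas the paper directly bounds $\Pr_T(\mathsf{Bad})\le\varepsilon/\mu_{\mathsf{min}}<\sqrt{\varepsilon}$, and both arguments identify the same set of bad trees.
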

\begin{proof}
Fix an arbitrary $\eta>0$. Choose algorithms
$\calA \in [f,\mu,\varepsilon,\delta]$ and
$\calA' \in [f,\mu,\alpha/\sqrt{\varepsilon},\delta']$ satisfying
\begin{equation*}
\E_{\mu}[|\calA|]
\leq \ED_{\varepsilon,\delta}(f)+\eta, \quad
\E_{\mu}[|\calA'|]
\leq \ED_{\alpha/\sqrt{\varepsilon},\delta'}(f)+\eta.
\end{equation*}
View $\calA$ as first sampling a deterministic decision tree $T$ according to its internal randomness and then running $T$ on the input. A deterministic tree may have ordinary output leaves and abort leaves. 
Call $T$ \emph{bad} if there exists an input $x\in\operatorname{supp}(\mu)$ such that $T(x)$ does not abort and
\begin{equation*}
T(x)\notin f(x).
\end{equation*}
Otherwise, call $T$ \emph{good}. 
Note that any query algorithm may decide whether $T$ is good or bad without any additional query access.
Thus, every good tree is correct on every input in $\operatorname{supp}(\mu)$ whenever it does not abort.

Let $\mathsf{Bad}$ denote the event that the tree sampled by $\calA$ is bad. For every bad tree $T$, choose an input $x_T\in\operatorname{supp}(\mu)$ on which $T$ outputs incorrectly. Since $\mu(x_T)\geq\mu_{\textsf{min}}$, the distributional error of $\calA$ satisfies
\begin{align*}
\varepsilon
&\geq
\Pr_{x\sim\mu,T}\bigl[T(x)\text{ does not abort and }T(x)\notin f(x)\bigr]\\
&=
\E_T\left[
\sum_{x\in\operatorname{supp}(\mu)}
\mu(x)\mathbf{1}\bigl[T(x)\text{ does not abort and }T(x)\notin f(x)\bigr]
\right]\\
&\geq
\mu_{\textsf{min}}\Pr_T(\mathsf{Bad}).
\end{align*}
Consequently,
\begin{equation}\label{eq_bad_tree_probability}
\Pr_T(\mathsf{Bad})
\leq \frac{\varepsilon}{\mu_{\textsf{min}}}
<\sqrt{\varepsilon}.
\end{equation}

We now define a new algorithm $\widetilde{\calA}$ for $f$.

\begin{center}
\resizebox{0.9\textwidth}{!}{
\begin{tabular}{|ll|} \hline
\multicolumn{2}{|c|}{\textbf{A new algorithm $\widetilde{\calA}$} for~\cref{Lem_conti_0_CD}}\\ \hline
1. &Sample the deterministic decision tree $T$ used by $\calA$.\\
2. &If $T$ is good, run $T$ on the input and return its output or abort whenever $T$ aborts.\\
3. &If $T$ is bad, discard $T$, run $\calA'$, and return the output or abort whenever $\calA'$ aborts.\\ \hline
\end{tabular}
}
\end{center}

The event $\mathsf{Bad}$ depends only on the sampled tree $T$ and is therefore independent of the input $x\sim\mu$. On the event $\mathsf{Bad}^{c}$, the algorithm is zero-error on $\operatorname{supp}(\mu)$. Hence the distributional error of $\widetilde{\calA}$ is at most
\begin{equation*}
\Pr_T(\mathsf{Bad})
\Pr_{x\sim\mu,R'}\bigl[\calA'(x)\text{ does not abort and }\calA'(x)\notin f(x)\bigr]
\leq
\sqrt{\varepsilon}\cdot\frac{\alpha}{\sqrt{\varepsilon}}
=\alpha,
\end{equation*}
where we used \eqref{eq_bad_tree_probability}.

Similarly, the abort probability of $\widetilde{\calA}$ is at most
\begin{equation*}
\Pr_{x\sim\mu,T}\bigl[\mathsf{Bad}^{c}\text{ and }T(x)\text{ aborts}\bigr]
+
\Pr_T(\mathsf{Bad})
\Pr_{x\sim\mu,R'}\bigl[\calA'(x)\text{ aborts}\bigr]\\
\leq
\delta+\sqrt{\varepsilon}\,\delta'.
\end{equation*}
Therefore,
$\widetilde{\calA}\in
[f,\mu,\alpha,\delta+\sqrt{\varepsilon}\,\delta'].$

Finally, its expected query cost satisfies
\begin{align*}
\E_{\mu}[|\widetilde{\calA}|]
&=
\E_{x\sim\mu,T}\left[
\mathbf{1}_{\mathsf{Bad}^{c}}|T(x)|
\right]
+
\Pr_T(\mathsf{Bad})\E_{\mu}[|\calA'|]\\
&\leq
\E_{\mu}[|\calA|]
+
\sqrt{\varepsilon}\E_{\mu}[|\calA'|]\\
&\leq
\ED_{\varepsilon,\delta}(f)
+
\sqrt{\varepsilon}\ED_{\alpha/\sqrt{\varepsilon},\delta'}(f)
+(1+\sqrt{\varepsilon})\eta.
\end{align*}
Since $\eta>0$ was arbitrary, we obtain
\begin{equation*}
\ED_{\alpha,\delta+\sqrt{\varepsilon}\delta'}(f)
\leq
\ED_{\varepsilon,\delta}(f)
+
\sqrt{\varepsilon}\ED_{\alpha/\sqrt{\varepsilon},\delta'}(f).
\end{equation*}

For the final assertion, set $\alpha=\delta'=0$. Monotonicity in the error parameter and the preceding inequality give
\begin{equation*}
\ED_{\varepsilon,\delta}(f)
\leq
\ED_{0,\delta}(f)
\leq
\ED_{\varepsilon,\delta}(f)
+
\sqrt{\varepsilon}\ED_{0,0}(f).
\end{equation*}
Since $f$ is a total relation with finite input length, $\ED_{0,0}(f)<\infty$. Letting $\varepsilon\searrow0$ proves continuity at zero.
\end{proof}

\section{Proof of lower bounds}\label{sec_lower-bounds}
The goal of \cref{sec_lower-bounds} is to prove the lower bounds for
\begin{equation*}
\liminf_{n \to \infty} \frac{\overline{D}^{\mu^n}_\varepsilon(f^n)}{n}
\Mand
\liminf_{n \to \infty} \frac{\overline{R}_\varepsilon(f^n)}{n}.
\end{equation*}
We first prove the per-distribution lower bound in \cref{Prop_Lower-bound}. We then combine it with the zero-error expected minimax theorem in \cref{Lem_Yao_exp} to obtain the randomized lower bound in \cref{Prop_Lower-bound_R}.

\begin{Prop}\label{Prop_Lower-bound}
For $\varepsilon \in [0, 1)$,
\begin{equation*}
\ED_{0, 0}(f) - \varepsilon \ER_0(f) \leq \liminf_{n \to \infty} \frac{\overline{D}^{\mu^n}_\varepsilon(f^n)}{n}.
\end{equation*}
\end{Prop}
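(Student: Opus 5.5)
Given an algorithm $\calA\in[f^n,\mu^n,\varepsilon]$ with $\E_{\mu^n}[|\calA|]\le \overline{D}^{\mu^n}_\varepsilon(f^n)+\eta$, I will build a single-copy algorithm for $f$ under $\mu$. It will have expected cost about $\E[|\calA|]/n$, abort probability at most $\varepsilon$, and error $\eta_n\to0$. Then I apply \cref{Lem_abort} and \cref{Lem_conti_0_CD}.

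\emph{Step 1 (embedding).} Write $C_i$ for the expected number of queries $\calA$ makes to coordinate $i$, so $\sum_i C_i=\E[|\calA|]$. The simulator picks $i$ uniformly at random, which makes the average cost $\E[|\calA|]/n$; alternatively one could fix a good $i^*$. It places the real input in coordinate $i$. Coordinates $j<i$ are sampled from $\mu$ by the simulator itself, so they are \emph{known} and their correctness can be checked for free. For coordinates $j>i$, each is sampled from $\mu$ only lazily, when $\calA$ queries it, so those queries are also free. Queries to coordinate $i$ are forwarded to the real input. When $\calA$ halts with output $(z_1,\dots,z_n)$, the simulator checks $z_j\in f(x_j)$ for every $j<i$. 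If some check fails it aborts; otherwise it outputs $z_i$.

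\emph{Step 2 (abort and error).} The abort probability is at most $\Pr[\calA \text{ errs}]\le\varepsilon$. Non-abort failure occurs when coordinate $i$ is the first wrong coordinate of $\calA$'s output. For a fixed run, at most one coordinate is the first wrong one, so averaging over the uniform $i$ gives error at most $1/n$. We therefore get an algorithm in $[f,\mu,1/n,\varepsilon]$ with expected cost at most $(\overline{D}^{\mu^n}_\varepsilon(f^n)+\eta)/n$. Using a random $i$ is exactly what makes both bounds hold simultaneously, with no averaging tradeoff. Hence
\[
\overline{D}^\mu_{1/n,\varepsilon}(f)\le \frac{\overline{D}^{\mu^n}_\varepsilon(f^n)}{n}.
\]

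\emph{Step 3 (removing abort and error).} \cref{Lem_abort} gives $\overline{D}^\mu_{1/n,0}(f)-\varepsilon\ER_0(f)\le \overline{D}^\mu_{1/n,\varepsilon}(f)$. By \cref{Lem_conti_0_CD} with $\alpha=\delta'=0$, $\overline{D}^\mu_{1/n,0}(f)\to\overline{D}^\mu_{0,0}(f)$ as $n\to\infty$. Taking $\liminf$ yields the claim.

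The main obstacle is Step 2's bookkeeping. The cost charged to the real input must be only the queries to coordinate $i$, and the lazily sampled coordinates must not distort the distribution. This holds because the joint input is exactly $\mu^n$. The error accounting through the ``first wrong coordinate'' argument is the delicate point. If it failed, the fallback is to choose $i^*$ by Markov-type averaging over both the cost and the first-error indicator.
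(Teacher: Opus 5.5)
Your proposal is correct and follows the paper's proof. You embed the real input into one coordinate and sample the others from $\mu$, abort when an earlier simulated coordinate is wrong, and then apply \cref{Lem_abort} and the zero-error continuity of \cref{Lem_conti_0_CD}. The only difference is that you pick the embedding coordinate uniformly at random instead of fixing a good $i^*$ via two Markov arguments; this gives cost exactly $\E[|\calA|]/n$ and error exactly $\frac{1}{n}\Pr[\calA\text{ errs}]\le\varepsilon/n$ simultaneously, so it removes the $\alpha,\beta$ bookkeeping and the final $\beta\downarrow 1$ limit (the paper itself uses this random-index embedding in its OR-composition lower bound).
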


\begin{proof}
Let $\calA_R$ be an algorithm for $[f^n,\mu^n,\varepsilon]$ with
\[
\E_{x^n\sim\mu^n,R}[|\calA_R(x^n)|]
\leq \overline{D}^{\mu^n}_\varepsilon(f^n)+\eta_n,
\]
where $\eta_n=o(n)$. Throughout this proof $x^n=(x_1,\ldots,x_n)$ is sampled from $\mu^n$, and the probability also includes the internal randomness $R$ of the algorithm.

For each coordinate $i$, let
\[
S_i := \{\calA_R(x^n)_i \notin f(x_i)\}
\Mand
G_{<i}:=\bigcap_{j<i} S_j^c.
\]
The events $S_i\cap G_{<i}$ are disjoint and their union is the event that at least one coordinate is wrong. Hence
\begin{equation*}
\varepsilon \geq \Pr(\calA_R(x^n)\notin f^n(x^n))
=\sum_{i\le n}\Pr(S_i\cap G_{<i}).
\end{equation*}
For any $\alpha>1$, Markov's inequality implies that at least $\lceil(1-1/\alpha)n\rceil$ coordinates satisfy
\begin{equation}\label{Prop_Lb_eq_1}
\Pr(S_i\cap G_{<i})\leq \alpha\frac{\varepsilon}{n}.
\end{equation}

Let $q_i(x^n,r)$ be the number of queries that $\calA_r$ makes to the $i$th input block. Since $\sum_i q_i(x^n,r)=|\calA_r(x^n)|$, for any $\beta>1$ Markov's inequality implies that at least $\lceil(1-1/\beta)n\rceil$ coordinates satisfy
\begin{equation*}
\E_{x^n\sim\mu^n,R}[q_i(x^n,R)]
\leq \beta\frac{\E_{x^n\sim\mu^n,R}[|\calA_R(x^n)|]}{n}.
\end{equation*}
Set $\alpha=2\beta/(\beta-1)$. The two large sets of coordinates intersect, so there is a coordinate $i^*$ satisfying both \eqref{Prop_Lb_eq_1} and
\begin{equation*}
\E_{x^n\sim\mu^n,R}[q_{i^*}(x^n,R)]
\leq \beta\frac{\E_{x^n\sim\mu^n,R}[|\calA_R(x^n)|]}{n}.
\end{equation*}

We now construct a single-copy algorithm $\calA_{\mathsf{new}}$ for $f$.
\begin{center}
\resizebox{1.0\textwidth}{!}{
\begin{tabular}{|ll|} \hline
\multicolumn{2}{|c|}{\textbf{Algorithm $\calA_{\mathsf{new}}$}}\\ \hline
1.&Sample $z_j\sim\mu$ independently for every $j\neq i^*$.\\
2.&Simulate $\calA_R$ on $y^n=(z_1,\ldots,z_{i^*-1},x,z_{i^*+1},\ldots,z_n)$.\\
3.&Abort if $\calA_R(y^n)_j\notin f(z_j)$ for some $j<i^*$. This check uses only sampled coordinates.\\
4.&If no abort occurs, output $\calA_R(y^n)_{i^*}$.\\ \hline
\end{tabular}
}
\end{center}
The distribution of $y^n$ is exactly $\mu^n$. Therefore, by the choice of $i^*$,
\begin{equation}\label{eq_Prop_L_1}
\Pr(\calA_{\mathsf{new}}(x)\notin f(x) \land \text{$\calA_{\mathsf{new}}$ does not abort})
\leq \frac{\alpha\varepsilon}{n}
\end{equation}
and
\begin{equation*}
\E_{x\sim\mu,R}[|\calA_{\mathsf{new},R}(x)|]
\leq
\E_{x^n\sim\mu^n,R}[q_{i^*}(x^n,R)]
\leq \beta\frac{\E_{x^n\sim\mu^n,R}[|\calA_R(x^n)|]}{n}.
\end{equation*}
Moreover,
\begin{equation*}
\Pr(\text{$\calA_{\mathsf{new}}$ aborts})
=\Pr(G_{<i^*}^c)
\leq \Pr(\calA_R(x^n)\notin f^n(x^n))
\leq \varepsilon.
\end{equation*}
Combining this abort bound with \eqref{eq_Prop_L_1}, the unconditional error probability of $\calA_{\mathsf{new}}$ is at most $\alpha\varepsilon/n$. Thus
\begin{equation*}
\ED_{\alpha\frac{\varepsilon}{n},\varepsilon}(f)
\leq \E_{x\sim\mu,R}[|\calA_{\mathsf{new},R}(x)|]
\leq \beta\frac{\overline{D}^{\mu^n}_\varepsilon(f^n)+\eta_n}{n}.
\end{equation*}
Applying \cref{Lem_abort} gives
\begin{equation*}
\ED_{\alpha\frac{\varepsilon}{n},0}(f)-\varepsilon\ER_0(f)
\leq \beta\frac{\overline{D}^{\mu^n}_\varepsilon(f^n)+\eta_n}{n}.
\end{equation*}
For $\varepsilon=0$, this already yields the claim after taking $n\to\infty$, then $\beta\downarrow1$. For $0<\varepsilon<1$, take $n\to\infty$ and use the zero-error continuity statement \cref{Lem_conti_0_CD}; since $\eta_n=o(n)$, we obtain
\begin{equation*}
\ED_{0,0}(f)-\varepsilon\ER_0(f)
\leq \beta\liminf_{n\to\infty}\frac{\overline{D}^{\mu^n}_\varepsilon(f^n)}{n}.
\end{equation*}
Finally let $\beta\downarrow1$.
\end{proof}

\cref{Lem_Yao_exp} is a zero-error version of Yao's minimax theorem for expected query complexity. The zero-error condition is important because it allows randomized algorithms to be viewed as distributions over deterministic algorithms that are correct on all relevant inputs.
\begin{Lem}\label{Lem_Yao_exp}
$\ER_0(f) = \sup_\mu \ED_0(f).$
\end{Lem}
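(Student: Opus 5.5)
The plan is to prove the two inequalities separately. The easy direction is $\sup_\mu \ED_0(f)\le \ER_0(f)$. Fix $\eta>0$ and a zero-error randomized algorithm $\calA\in[f,0]$ with $\max_x \E_r[|\calA_r(x)|]\le \ER_0(f)+\eta$. For any $\mu$, $\calA$ also lies in $[f,\mu,0,0]$, and its distributional expected cost is an average of the per-input costs, so it is at most $\ER_0(f)+\eta$. Hence $\ED_0(f)\le\ER_0(f)$ for every $\mu$, and we let $\eta\to0$.

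For the reverse inequality we use a finite two-player zero-sum game. The key observation is the one already made in the preliminaries. A zero-error randomized algorithm must place probability only on deterministic decision trees that are correct on every input of $\mathcal{X}$. On a distribution $\mu$ with full support, a zero-error distributional algorithm must likewise use only such trees. Let $\mathcal{T}$ be the set of deterministic decision trees on $\mathcal{X}\subseteq\Bset^m$ that are correct on all of $\mathcal{X}$.

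We may restrict to trees that never query a variable twice along a path. Then $\mathcal{T}$ is finite, since the depth is at most $m$ and the output alphabet $\mathcal{Z}$ is finite, and it is nonempty because $f$ is total. Removing repeated queries only lowers cost, so this restriction does not change either $\ER_0(f)$ or $\ED_0(f)$.

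Define the payoff matrix $M(T,x)=|T(x)|$ for $T\in\mathcal{T}$ and $x\in\mathcal{X}$. Then
\[
\ER_0(f)=\min_{p\in\Delta(\mathcal{T})}\max_{x}\E_{T\sim p}|T(x)|,
\]
where the infimum is attained by compactness. The paper's definition of $\ER_0$ uses a max over $x$, which matches pure strategies for the column player, and this coincides with the max over mixed $\mu$ by linearity.

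Von Neumann's minimax theorem for this finite game then gives
\[
\ER_0(f)=\max_{\mu\in\Delta(\mathcal{X})}\min_{T\in\mathcal{T}}\E_{x\sim\mu}|T(x)|.
\]
It remains to check that for each $\mu$,
\[
\min_{T\in\mathcal{T}}\E_\mu|T(x)|\le \ED_0(f).
\]
This support issue is the main obstacle. For a $\mu$ without full support, zero-error algorithms for $\ED_0(f)$ only need to be correct on $\operatorname{supp}(\mu)$, so they may use trees outside $\mathcal{T}$, and the inequality can fail.

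We handle this by restricting attention to full-support distributions. Take $\mu^*$ attaining the max and mix it with the uniform distribution: $\mu_\gamma=(1-\gamma)\mu^*+\gamma\,\mathrm{Unif}(\mathcal{X})$. Because $\mu_\gamma$ has full support, every zero-error tree under $\mu_\gamma$ is in $\mathcal{T}$. By linearity in the distribution over trees, it follows that
\[
\ED_0(f)\text{ at }\mu_\gamma=\min_{T\in\mathcal{T}}\E_{\mu_\gamma}|T|.
\]
The function $\mu\mapsto\min_{T\in\mathcal{T}}\E_\mu|T|$ is a minimum of finitely many linear functions, hence continuous. Therefore $\min_{T\in\mathcal{T}}\E_{\mu_\gamma}|T|$ tends to $\ER_0(f)$ as $\gamma\to0$.

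Consequently $\sup_\mu\ED_0(f)\ge\ER_0(f)$, which together with the easy direction gives equality.
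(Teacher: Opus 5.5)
Your proposal is correct and follows essentially the same route as the paper. The easy direction uses the same averaging argument. For the reverse direction you, like the paper, apply von Neumann's minimax theorem to the finite game between zero-error trees and inputs, take a pure best response, mix the optimal $\mu^*$ with the uniform distribution to get full support, and use continuity of a minimum of finitely many linear functions; your remark on why full support is needed (trees need only be correct on $\operatorname{supp}(\mu)$) is the point the paper handles with that perturbation.
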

\begin{proof}
We prove the two inequalities separately.

\textbf{Direction 1: $\ER_0(f) \geq \sup_\mu \ED_0(f)$.}
Fix $\gamma>0$. By definition of $\ER_0(f)$, there is a randomized zero-error algorithm, equivalently a distribution $\rho$ over deterministic zero-error decision trees $\{\calA_r\}$, such that
\begin{equation*}
\max_x \E_{r\sim\rho}[|\calA_r(x)|]
\leq \ER_0(f)+\gamma.
\end{equation*}
For every input distribution $\nu$,
\begin{equation*}
\E_{x\sim\nu,r\sim\rho}[|\calA_r(x)|]
\leq \max_x \E_{r\sim\rho}[|\calA_r(x)|]
\leq \ER_0(f)+\gamma.
\end{equation*}
The same randomized algorithm belongs to $[f,\nu,0]$, and therefore
\begin{equation*}
\overline{D}^{\nu}_0(f)
\leq \E_{x\sim\nu,r\sim\rho}[|\calA_r(x)|]
\leq \ER_0(f)+\gamma.
\end{equation*}
Taking the supremum over $\nu$ and then letting $\gamma\to0$ proves this direction.

\textbf{Direction 2: $\ER_0(f) \leq \sup_\mu \ED_0(f)$.}
We view a randomized zero-error algorithm as a distribution over the finite set of deterministic zero-error decision trees.\footnote{For this purpose, without loss of generality we restrict deterministic algorithms for a total relation $f\subseteq\Bset^m\times\mathcal Z$ to at most $m$ queries; querying all input bits is always sufficient.}
Von Neumann's minimax theorem gives
\begin{equation*}
\ER_0(f)
=\min_{\calA_R\in[f,0]}\max_\mu \E_{\mu,R}[|\calA_R(x)|]
=\max_\mu\min_{\calA_R\in[f,0]}\E_{\mu,R}[|\calA_R(x)|].
\end{equation*}
The inner minimum is achieved at an extreme point, hence at a deterministic zero-error decision tree. Thus
\begin{equation*}
\ER_0(f)=\max_\mu \min_{\calA_D\text{: deterministic zero-error}}\E_\mu[|\calA_D(x)|].
\end{equation*}
Let $\mu^*$ be a distribution achieving the maximum, and for $0<\theta<1$ let
\begin{equation*}
\mu_\theta := (1-\theta)\mu^* + \theta\cdot\mathsf{Uni},
\end{equation*}
where $\mathsf{Uni}$ is the uniform distribution on $\mathcal X$. Then $\mu_\theta$ has full support and converges to $\mu^*$ as $\theta\to0$. Since the set of deterministic zero-error trees is finite, the function
$\mu\mapsto \min_{\calA_D}\E_\mu[|\calA_D(x)|]$ is continuous. Therefore
\begin{align}
\ER_0(f)
&= \min_{\calA_D\text{: deterministic zero-error}} \E_{\mu^*}[|\calA_D(x)|]\nonumber\\
&= \lim_{\theta\to0}\min_{\calA_D\text{: deterministic zero-error}}\E_{\mu_\theta}[|\calA_D(x)|]\label{Lem_Yao_exp_align1}\\
&\leq \sup_{\mu\text{: full support}}\min_{\calA_D\text{: deterministic zero-error}}\E_{\mu}[|\calA_D(x)|]\label{Lem_Yao_exp_align3}\\
&= \sup_{\mu\text{: full support}}\ED_0(f)\label{Lem_Yao_exp_align4}\\
&\leq \sup_{\mu\text{: all}}\ED_0(f).\label{Lem_Yao_exp_align5}
\end{align}
For full-support $\mu$, zero distributional error under $\mu$ forces correctness on every input in $\mathcal X$ and on every deterministic tree in the support of the randomized algorithm. This justifies \eqref{Lem_Yao_exp_align4} and completes the proof.
\end{proof}

\begin{Prop}\label{Prop_Lower-bound_R}
For $\varepsilon \in [0, 1)$,
\begin{equation*}
(1 - \varepsilon) \ER_{0}(f) \leq \liminf_{n \to \infty} \frac{\ER_\varepsilon(f^n)}{n}.
\end{equation*}
\end{Prop}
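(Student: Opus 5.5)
The plan is to derive the randomized lower bound from the distributional one, \cref{Prop_Lower-bound}, using the zero-error minimax theorem \cref{Lem_Yao_exp}. The proof has three steps.

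\textbf{Step 1: compare randomized and distributional cost for $f^n$.} Every randomized algorithm in $[f^n,\varepsilon]$ has error at most $\varepsilon$ on every input. It therefore lies in $[f^n,\mu^n,\varepsilon]$ for any $\mu$. Its expected cost under $\mu^n$ is at most its worst-case expected cost $\max_{x^n}\E_r[|\calA_r(x^n)|]$. Taking infima gives, for every $\mu$ on $\mathcal X$ and every $n$,
\[
\overline{D}^{\mu^n}_\varepsilon(f^n)\le \ER_\varepsilon(f^n).
\]

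\textbf{Step 2: apply \cref{Prop_Lower-bound} for a fixed $\mu$.} Dividing Step 1 by $n$ and taking $\liminf$ gives, for every fixed $\mu$,
\[
\ED_{0}(f)-\varepsilon\ER_0(f)\le \liminf_{n\to\infty}\frac{\overline{D}^{\mu^n}_\varepsilon(f^n)}{n}\le \liminf_{n\to\infty}\frac{\ER_\varepsilon(f^n)}{n}.
\]
The right-hand side does not depend on $\mu$. We may therefore take the supremum over $\mu$ on the left.

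\textbf{Step 3: apply \cref{Lem_Yao_exp}.} Because the term $\varepsilon\ER_0(f)$ does not depend on $\mu$, the supremum of the left side equals $\sup_\mu\ED_0(f)-\varepsilon\ER_0(f)$. By \cref{Lem_Yao_exp}, $\sup_\mu\ED_0(f)=\ER_0(f)$, so the supremum is $(1-\varepsilon)\ER_0(f)$. If one prefers to avoid manipulating suprema directly, fix $\gamma>0$ and choose $\mu$ with $\ED_0(f)\ge \ER_0(f)-\gamma$. Then let $\gamma\to0$.

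\textbf{Main obstacle.} The substantive difficulty sits entirely in \cref{Prop_Lower-bound} and \cref{Lem_Yao_exp}, which may be assumed here. One point still needs care: \cref{Prop_Lower-bound} is stated for a fixed $\mu$, possibly without full support. Its proof calls \cref{Lem_conti_0_CD}, which requires $\mu_{\min}>0$ over $\operatorname{supp}(\mu)$; this always holds for finite $\mathcal X$, so there is no issue. Step 3 is also valid even if the maximizing distribution is not of full support, since \cref{Lem_Yao_exp} is stated with a supremum over all $\mu$. Finally, the hypothesis $\varepsilon\in[0,1)$ matches exactly the range of \cref{Prop_Lower-bound}, so no separate case analysis is required.
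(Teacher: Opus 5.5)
Your proof is correct and matches the paper's: it also picks $\mu^*$ with $\overline{D}^{\mu^*}_0(f)>\ER_0(f)-\gamma$ via \cref{Lem_Yao_exp} and applies \cref{Prop_Lower-bound} to $\mu^*$. It then bounds $\overline{D}^{(\mu^*)^n}_\varepsilon(f^n)\le \ER_\varepsilon(f^n)$, since a worst-case-error randomized algorithm is a valid distributional algorithm, and lets $\gamma\to0$; only the order in which you present these steps differs.
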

\begin{proof}
By \cref{Lem_Yao_exp}, for every $\gamma>0$ there is a distribution $\mu^*$ such that
\begin{equation*}
\ER_0(f)-\gamma < \overline{D}^{\mu^*}_0(f).
\end{equation*}
Applying \cref{Prop_Lower-bound} to $\mu^*$ gives
\begin{equation*}
\overline{D}^{\mu^*}_0(f)-\varepsilon\ER_0(f)
\leq \liminf_{n\to\infty}\frac{\overline{D}^{(\mu^*)^n}_\varepsilon(f^n)}{n}.
\end{equation*}
Since any randomized worst-case algorithm is also a distributional expected algorithm under $(\mu^*)^n$,
\begin{equation*}
\liminf_{n\to\infty}\frac{\overline{D}^{(\mu^*)^n}_\varepsilon(f^n)}{n}
\leq
\liminf_{n\to\infty}\frac{\ER_\varepsilon(f^n)}{n}.
\end{equation*}
Therefore
\begin{equation*}
(1-\varepsilon)\ER_0(f)-\gamma
\leq \liminf_{n\to\infty}\frac{\ER_\varepsilon(f^n)}{n}.
\end{equation*}
Letting $\gamma\to0$ proves the claim.
\end{proof}

\section{Proof of upper bounds}\label{sec_upper-bounds}
\cref{sec_upper-bounds} proves the upper bounds used in \cref{Thm_asymptotics}. The expected-cost upper bounds are obtained by running the zero-error algorithm only with probability $1-\varepsilon$. The worst-case upper bounds are obtained by repeating a zero-error algorithm and truncating the total number of queries.

\begin{Lem}\label{Lem_algo_CD}
For any $n \in \mathbb{N}$ and $\varepsilon >0$, there is an algorithm $\calA \in [f^n, \mu^n, \varepsilon]$ such that
\begin{equation}\label{eq_Lem_algo_CD}
\Cost(\calA) \leq  n\overline{D}^{\mu}_{0}(f) + o(n).
\end{equation}
Consequently, $D_\varepsilon^{\mu^n}(f^n) \leq n\overline{D}^{\mu}_{0}(f) + o(n)$.
\par
Additionally, for any $n \in \mathbb{N}$ and $\varepsilon \in[0,1]$, there is an algorithm $\calA \in [f^n, \mu^n, \varepsilon]$ such that
\begin{equation}\label{eq_Lem_algo_expCD}
\E_{\mu^n}[|\calA|] \leq  n(1-\varepsilon)\overline{D}^{\mu}_{0}(f)+o(n).
\end{equation}
Consequently, $\overline{D}_\varepsilon^{\mu^n}(f^n) \leq n(1 - \varepsilon)\overline{D}^{\mu}_{0}(f)+o(n)$.
\end{Lem}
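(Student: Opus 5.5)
Both parts will be built from one primitive. Fix $\gamma>0$ and take a zero-error algorithm $\calB\in[f,\mu,0]$ with $\E_\mu[|\calB|]\le \ED_0(f)+\gamma$. By the zero-error remark in the preliminaries, we may take $\calB$ to be a single deterministic decision tree that is correct on every $x\in\operatorname{supp}(\mu)$. Let $\calB^{\otimes n}$ run $\calB$ independently on each block of $x^n\sim\mu^n$. It is correct on every input in $\operatorname{supp}(\mu^n)$. Its cost $Q=\sum_i |\calB(x_i)|$ is a sum of $n$ i.i.d.\ random variables with mean $m_\gamma:=\E_\mu[|\calB|]$ and variance $\sigma^2:=\sigma(\calB,\mu)^2$. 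The variance is finite, indeed at most $m^2$, since every tree makes at most $m$ queries (cf.\ \cref{Fact_conti_expe}).

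\textbf{Expected-cost bound \eqref{eq_Lem_algo_expCD}.} With probability $1-\varepsilon$, using fresh randomness independent of the input, run $\calB^{\otimes n}$. With probability $\varepsilon$, make no queries and output an arbitrary fixed tuple in $\mathcal{Z}^n$. The error is at most $\varepsilon$ because the first branch never errs on $\operatorname{supp}(\mu^n)$, and there are no aborts. The expected cost is $(1-\varepsilon)n m_\gamma\le n(1-\varepsilon)\ED_0(f)+(1-\varepsilon)n\gamma$. To turn $n\gamma$ into $o(n)$, choose $\gamma=\gamma_n\to0$; for instance, take $\calB_n$ with $\E_\mu[|\calB_n|]\le \ED_0(f)+1/n$. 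By the finiteness remark in the proof of \cref{Lem_Yao_exp}, the infimum is in fact attained by some deterministic tree, so one can even take $\gamma=0$. The consequence for $\ED$ then follows from the definition.

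\textbf{Worst-case bound \eqref{eq_Lem_algo_CD}.} Set the budget $B_n:=n m_\gamma + t_n$ with $t_n:=\sigma\sqrt{n/\varepsilon}$. The algorithm runs $\calB^{\otimes n}$ block by block, keeping a global query counter. If the counter would exceed $B_n$, it stops and outputs an arbitrary tuple. Then $\Cost(\calA)\le B_n$ over all of $\operatorname{supp}(\mu^n)$. The output can be wrong only if truncation occurred, that is, only if $Q>B_n$. By Chebyshev's inequality, $\Pr[Q-nm_\gamma>t_n]\le n\sigma^2/t_n^2=\varepsilon$. So $\calA\in[f^n,\mu^n,\varepsilon]$ and it is deterministic, which gives $D^{\mu^n}_\varepsilon(f^n)\le B_n$.

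The remaining point, which is the only delicate one, is to make the overhead sublinear. Here $B_n-n\ED_0(f)\le n\gamma+\sigma\sqrt{n/\varepsilon}$. Taking $\gamma=0$ via an optimal tree, or $\gamma_n=n^{-1/2}$ with $\sigma\le m$ uniformly, gives $O_\varepsilon(\sqrt n)=o(n)$. Because the variance bound $\sigma\le m$ holds uniformly in the choice of $\calB$, the dependence of $\sigma$ on $\gamma$ causes no trouble. Since $\varepsilon>0$ is fixed, the $1/\sqrt{\varepsilon}$ factor is harmless. This is exactly why the worst-case part requires $\varepsilon>0$.
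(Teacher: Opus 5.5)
Your proposal is correct and follows essentially the same route as the paper. Both run independent copies of a near-optimal zero-error single-copy algorithm, truncate at $n\overline{D}^{\mu}_0(f)+O_\varepsilon(\sqrt{n})$ via Chebyshev's inequality with the uniform variance bound coming from depth at most $m$, and for the expected-cost bound run the repetition only with probability $1-\varepsilon$. One small point in your favor: taking $\calB$ to be a single deterministic tree makes the truncated algorithm deterministic, so the bound on $D^{\mu^n}_\varepsilon(f^n)$ follows immediately, whereas the paper leaves that averaging (derandomization) step implicit.
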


\begin{proof}
For each $n$, choose a zero-error algorithm $\calB\in[f,\mu,0]$ satisfying
\begin{equation*}
\E_\mu[|\calB|]\leq \overline{D}^{\mu}_0(f)+\gamma_n,
\end{equation*}
where $\gamma_n\to0$. Let $\calB^n$ be the algorithm that runs independent copies of $\calB$ on all $n$ coordinates. Then $\calB^n$ has zero error under $\mu^n$ and
\begin{equation*}
\E_{\mu^n}[|\calB^n|]\leq n(\overline{D}^{\mu}_0(f)+\gamma_n).
\end{equation*}
Since the input length of $f$ is fixed, we may assume every deterministic decision tree queries at most all input bits; hence the variance of the query cost of $\calB$ is bounded by a constant depending only on $f$. Chebyshev's inequality therefore gives, for every fixed $k>0$,
\begin{equation*}
\Pr_{x^n\sim\mu^n,R}\left(|\calB^n_R(x^n)|\geq n(\overline{D}^{\mu}_0(f)+\gamma_n)+k\sqrt{n}\,\sigma(\calB,\mu)\right)
\leq \frac{1}{k^2}.
\end{equation*}
Truncate $\calB^n$ when the query count reaches this threshold. With $k=\lceil1/\sqrt{\varepsilon}\rceil$, the truncation causes error at most $\varepsilon$ under $\mu^n$, and the worst-case cost is at most $n\overline{D}^{\mu}_0(f)+o(n)$. This proves \eqref{eq_Lem_algo_CD}.

For the expected-cost statement, run $\calB^n$ with probability $1-\varepsilon$ and output an arbitrary value with probability $\varepsilon$. This algorithm has error at most $\varepsilon$ under $\mu^n$ and expected query cost at most
\begin{equation*}
(1-\varepsilon)n(\overline{D}^{\mu}_0(f)+\gamma_n)
= n(1-\varepsilon)\overline{D}^{\mu}_0(f)+o(n),
\end{equation*}
which proves \eqref{eq_Lem_algo_expCD}.
\end{proof}

\begin{Lem}\label{Lem_algo_CR}
For any $n \in \mathbb{N}$ and $\varepsilon > 0$, there is an algorithm $\calA \in [f^n, \varepsilon]$ such that
\begin{equation}\label{eq_Lem_algo_CR}
\Cost(\calA) \leq  n\ER_0(f) + o(n).
\end{equation}
Consequently, $R_\varepsilon(f^n) \leq n \ER_0(f) + o(n)$.
\par
Additionally, for any $n \in \mathbb{N}$ and $\varepsilon\in[0,1]$, there is an algorithm $\calA_R \in [f^n, \varepsilon]$ such that
\begin{equation}\label{eq_Lem_algo_expCR}
\E_R[|\calA_R(x^n)|] \leq  n(1-\varepsilon)\overline{R}_{0}(f)+o(n)
\end{equation}
for every input $x^n$.
Consequently, $\overline{R}_\varepsilon(f^n) \leq n(1 - \varepsilon)\overline{R}_{0}(f)+o(n)$.
\end{Lem}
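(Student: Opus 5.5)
The plan mirrors the proof of \cref{Lem_algo_CD}, with the distributional zero-error algorithm replaced by a randomized zero-error algorithm whose expected cost is controlled on every input. For each $n$, choose $\calB\in[f,0]$ with $\max_x\E_r[|\calB_r(x)|]\leq \ER_0(f)+\gamma_n$, where $\gamma_n\to0$. Let $\calB^n$ run independent copies of $\calB$ (fresh randomness per coordinate) on the $n$ blocks. Then $\calB^n$ is zero-error on every input $x^n$. By linearity,
\[
\E_R[|\calB^n_R(x^n)|]=\sum_{i}\E_r[|\calB_r(x_i)|]\leq n(\ER_0(f)+\gamma_n)
\]
for every $x^n$.

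\textbf{Expected-cost statement.} Run $\calB^n$ with probability $1-\varepsilon$; otherwise output an arbitrary string without querying. On every input this errs with probability at most $\varepsilon$ and never aborts. Its expected cost on each $x^n$ is at most $(1-\varepsilon)n(\ER_0(f)+\gamma_n)=n(1-\varepsilon)\ER_0(f)+o(n)$, which gives \eqref{eq_Lem_algo_expCR}.

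\textbf{Worst-case statement.} Fix an input $x^n$. The coordinate query counts $|\calB_{r_i}(x_i)|$ are independent over $i$, since the randomness is independent and the inputs are fixed. Each count lies in $[0,m]$, because we may assume no tree queries more than all $m$ bits. Hence the variance of the total is at most $nm^2$, uniformly over inputs; this replaces $\max_\mu\sigma(\calB,\mu)$ from \cref{Fact_conti_expe}. The mean is at most $n(\ER_0(f)+\gamma_n)$. Chebyshev then gives, for every $x^n$,
\[
\Pr_R\bigl(|\calB^n_R(x^n)|\geq n(\ER_0(f)+\gamma_n)+k\sqrt n\,m\bigr)\leq 1/k^2 .
\]
Truncate $\calB^n$ once the query count reaches this threshold, and output an arbitrary answer in that case. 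With $k=\lceil 1/\sqrt\varepsilon\rceil$, the error on every input is at most $\varepsilon$, since the untruncated run is always correct. The worst-case cost is $n\ER_0(f)+n\gamma_n+O(\sqrt n/\sqrt{\varepsilon})=n\ER_0(f)+o(n)$ for fixed $\varepsilon>0$.

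The only delicate point is making the concentration bound uniform over all inputs rather than averaged over a distribution. Independence across coordinates for a fixed input, together with the crude bound of $m$ queries per block, handles this. A smaller point is that $\gamma_n$ must be chosen to vanish, so that the near-optimality slack is $o(n)$.
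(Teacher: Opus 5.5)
Your proposal is correct and follows essentially the same route as the paper: independent copies of a near-optimal zero-error algorithm, then Chebyshev plus truncation for the worst-case bound, and running the repeated algorithm with probability $1-\varepsilon$ for the expected bound. The only difference is cosmetic. You bound the per-block variance on a fixed input directly by $m^2$, whereas the paper invokes $\max_\nu\sigma(\calB,\nu)$ from \cref{Fact_conti_expe} with point-mass product distributions; your version has the small advantage of making the bound visibly uniform in $n$, even though $\calB$ depends on $n$ through $\gamma_n$.
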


\begin{proof}
For each $n$, choose a zero-error randomized algorithm $\calB\in[f,0]$ satisfying
\begin{equation*}
\max_x\E_R[|\calB_R(x)|]\leq \ER_0(f)+\gamma_n,
\end{equation*}
where $\gamma_n\to0$. Let $\calB^n$ be the algorithm that runs independent copies of $\calB$ on all $n$ coordinates. For any product distribution $\mu^{\otimes n}=\mu_1\times\cdots\times\mu_n$,
\begin{equation*}
\E_{\mu^{\otimes n}}[|\calB^n|]=\sum_{i\le n}\E_{\mu_i}[|\calB|]
\leq n(\ER_0(f)+\gamma_n).
\end{equation*}
The standard deviation satisfies
\begin{equation*}
\sigma_n(\calB^n,\mu^{\otimes n})=\sqrt{\sum_{i\le n}\sigma^2(\calB,\mu_i)}
\leq \sqrt n\max_\nu\sigma(\calB,\nu).
\end{equation*}
By \cref{Fact_conti_expe}, the maximum on the right exists. Chebyshev's inequality implies
\begin{equation}\label{Thm_eq_Che_random_GO}
\Pr_{\mu^{\otimes n}}\left(|\calB^n|\geq n(\ER_0(f)+\gamma_n)+k\sqrt n\max_\nu\sigma(\calB,\nu)\right)
\leq \frac1{k^2}.
\end{equation}
Define $\calB^n_k$ by truncating $\calB^n$ when the number of queries reaches the threshold in \eqref{Thm_eq_Che_random_GO}. For any fixed input $(x^0_1,\ldots,x^0_n)$, apply \eqref{Thm_eq_Che_random_GO} to the product distribution with $\mu_i(x_i^0)=1$. Then the error probability of $\calB^n_k$ on this input is at most $1/k^2$. Taking $k=\lceil1/\sqrt\varepsilon\rceil$, we obtain $\calB^n_k\in[f^n,\varepsilon]$ and
\begin{equation*}
\Cost(\calB^n_k)\leq n\ER_0(f)+o(n),
\end{equation*}
which proves \eqref{eq_Lem_algo_CR}.

For the expected-cost statement, run $\calB^n$ with probability $1-\varepsilon$ and output an arbitrary value with probability $\varepsilon$. For every input $x^n$, the error probability is at most $\varepsilon$, and the expected query cost is at most
\begin{equation*}
(1-\varepsilon)n(\ER_0(f)+\gamma_n)
= n(1-\varepsilon)\ER_0(f)+o(n).
\end{equation*}
This proves \eqref{eq_Lem_algo_expCR}.
\end{proof}

\section{Main results}\label{sec_Main-results}
We now assemble the lower bounds from \cref{sec_lower-bounds} and the upper bounds from \cref{sec_upper-bounds}.

\begin{Thm}\label{Thm_asymptotics}
For the expected-cost scenarios,
\begin{align*}
\ED_0(f) - \varepsilon \ER_0(f)
&\leq \liminf_{n \to \infty} \frac{\overline{D}_\varepsilon^{\mu^n}(f^n)}{n}
\leq \limsup_{n \to \infty} \frac{\overline{D}_\varepsilon^{\mu^n}(f^n)}{n}
\leq  (1- \varepsilon) \ED_0(f),\\
\lim_{n \to \infty} \frac{\overline{R}_\varepsilon(f^n)}{n}
&= (1- \varepsilon) \ER_0(f),
\end{align*}
hold for any total relation $f \subseteq \mathcal{X} \times \mathcal{Z}$ and any $\varepsilon \in [0, 1]$.
For the worst-case scenarios,
\begin{align*}
 \ED_0(f)  - \varepsilon \ER_0(f)
&\leq  \liminf_{n \to \infty} \frac{D_\varepsilon^{\mu^n}(f^n)}{n}
\leq  \limsup_{n \to \infty} \frac{D_\varepsilon^{\mu^n}(f^n)}{n}
\leq \ED_0(f),\\
(1- \varepsilon) \ER_0(f)
&\leq  \liminf_{n \to \infty} \frac{R_\varepsilon(f^n)}{n}
\leq  \limsup_{n \to \infty} \frac{R_\varepsilon(f^n)}{n}
\leq \ER_0(f),
\end{align*}
hold for any $f$ and any $\varepsilon \in (0, 1]$.
\end{Thm}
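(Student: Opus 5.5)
The theorem is an assembly of the bounds proved in \cref{sec_lower-bounds} and \cref{sec_upper-bounds}. The plan is to treat the four chains separately, matching each lower and upper bound to an earlier result, with care only at the endpoint $\varepsilon=1$ and in comparing worst-case with expected measures.

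\emph{Expected distributional chain.} For $\varepsilon\in[0,1)$ the lower bound is exactly \cref{Prop_Lower-bound}. For $\varepsilon=1$ the left side is $\ED_0(f)-\ER_0(f)\le 0$ by \cref{Lem_Yao_exp} (direction 1), so it holds trivially. The upper bound follows by dividing \eqref{eq_Lem_algo_expCD} of \cref{Lem_algo_CD} by $n$ and taking $\limsup$; it is valid for all $\varepsilon\in[0,1]$.

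\emph{Expected randomized limit.} \cref{Prop_Lower-bound_R} gives $\liminf_n \ER_\varepsilon(f^n)/n\ge(1-\varepsilon)\ER_0(f)$ for $\varepsilon<1$, and the case $\varepsilon=1$ is trivial since the bound is $0$. \eqref{eq_Lem_algo_expCR} of \cref{Lem_algo_CR} gives $\limsup\le(1-\varepsilon)\ER_0(f)$. Since liminf and limsup coincide, the limit exists and equals $(1-\varepsilon)\ER_0(f)$.

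\emph{Worst-case chains, $\varepsilon\in(0,1]$.} The upper bounds are the truncation statements \eqref{eq_Lem_algo_CD} and \eqref{eq_Lem_algo_CR}, divided by $n$. For the lower bounds, compare with expected cost. A deterministic tree with worst-case cost $D^{\mu^n}_\varepsilon(f^n)$ and error $\le\varepsilon$ under $\mu^n$ lies in $[f^n,\mu^n,\varepsilon]$, and its expected cost is at most its worst-case cost, so $\overline{D}^{\mu^n}_\varepsilon(f^n)\le D^{\mu^n}_\varepsilon(f^n)$. Likewise $\ER_\varepsilon(f^n)\le R_\varepsilon(f^n)$. Hence the liminf lower bounds transfer from the expected chains: \cref{Prop_Lower-bound} for the distributional case and \cref{Prop_Lower-bound_R} for the randomized case. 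At $\varepsilon=1$ they again follow trivially from nonnegativity.

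\emph{Main point of care.} The only delicate step is this bookkeeping: checking that each earlier lemma covers the stated range of $\varepsilon$, in particular that \cref{Prop_Lower-bound} and \cref{Prop_Lower-bound_R} are stated only for $\varepsilon<1$ and need the separate trivial argument at $\varepsilon=1$. One should also confirm that the $o(n)$ terms in \cref{Lem_algo_CD} and \cref{Lem_algo_CR} depend only on $f$, $\mu$ and $\varepsilon$, and not on $n$ beyond sublinear growth. This holds because the variance bound is a constant depending on $f$ alone, via \cref{Fact_conti_expe}.
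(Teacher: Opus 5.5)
Your proposal is correct and follows the paper's proof essentially step for step. You take the lower bounds from \cref{Prop_Lower-bound} and \cref{Prop_Lower-bound_R}, with the trivial case $\varepsilon=1$, the upper bounds from \eqref{eq_Lem_algo_expCD}, \eqref{eq_Lem_algo_expCR}, \eqref{eq_Lem_algo_CD} and \eqref{eq_Lem_algo_CR}, and the worst-case lower bounds via $\overline{D}^{\mu^n}_\varepsilon(f^n)\le D^{\mu^n}_\varepsilon(f^n)$ and $\ER_\varepsilon(f^n)\le R_\varepsilon(f^n)$. One small correction: the $o(n)$ terms are uniform in $n$ because every tree makes at most $m$ queries, so $\sigma\le m$ for every choice of $\calB$; \cref{Fact_conti_expe} alone only guarantees that the maximum exists for a fixed algorithm.
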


\begin{proof}
We first prove the expected per-distribution bounds. For $\varepsilon\in[0,1)$, \cref{Prop_Lower-bound} gives
\begin{equation*}
\ED_0(f)-\varepsilon\ER_0(f)
\leq
\liminf_{n\to\infty}\frac{\overline{D}^{\mu^n}_\varepsilon(f^n)}{n}.
\end{equation*}
For $\varepsilon=1$, the same lower bound is trivial because the left-hand side is at most zero. The upper bound follows from \cref{Lem_algo_CD}, specifically \eqref{eq_Lem_algo_expCD}:
\begin{equation*}
\limsup_{n\to\infty}\frac{\overline{D}^{\mu^n}_\varepsilon(f^n)}{n}
\leq (1-\varepsilon)\ED_0(f).
\end{equation*}
Together with the elementary inequality $\liminf\leq\limsup$, this proves the first displayed chain.

Next we prove the randomized expected identity. For $\varepsilon\in[0,1)$, \cref{Prop_Lower-bound_R} gives
\begin{equation*}
(1-\varepsilon)\ER_0(f)
\leq
\liminf_{n\to\infty}\frac{\ER_\varepsilon(f^n)}{n}.
\end{equation*}
For $\varepsilon=1$, the lower bound is again trivial. The upper bound follows from \cref{Lem_algo_CR}, specifically \eqref{eq_Lem_algo_expCR}:
\begin{equation*}
\limsup_{n\to\infty}\frac{\ER_\varepsilon(f^n)}{n}
\leq (1-\varepsilon)\ER_0(f).
\end{equation*}
The lower and upper bounds match, so the limit exists and equals $(1-\varepsilon)\ER_0(f)$.

For the worst-case per-distribution bound, the lower bound follows from the expected-cost lower bound because worst-case query cost dominates expected query cost under $\mu^n$:
\begin{equation*}
\overline{D}^{\mu^n}_\varepsilon(f^n)\leq D^{\mu^n}_\varepsilon(f^n).
\end{equation*}
The upper bound is \cref{Lem_algo_CD}, \eqref{eq_Lem_algo_CD}. This proves
\begin{equation*}
\ED_0(f)-\varepsilon\ER_0(f)
\leq
\liminf_{n\to\infty}\frac{D^{\mu^n}_\varepsilon(f^n)}{n}
\leq
\limsup_{n\to\infty}\frac{D^{\mu^n}_\varepsilon(f^n)}{n}
\leq \ED_0(f).
\end{equation*}

Similarly, for the worst-case randomized bound, the lower bound follows from the randomized expected lower bound because
\begin{equation*}
\ER_\varepsilon(f^n)\leq R_\varepsilon(f^n),
\end{equation*}
and the upper bound follows from \cref{Lem_algo_CR}, \eqref{eq_Lem_algo_CR}. Hence
\begin{equation*}
(1-\varepsilon)\ER_0(f)
\leq
\liminf_{n\to\infty}\frac{R_\varepsilon(f^n)}{n}
\leq
\limsup_{n\to\infty}\frac{R_\varepsilon(f^n)}{n}
\leq \ER_0(f).
\end{equation*}
This completes the proof.
\end{proof}

We emphasize that the randomized expected quantity has a genuine limit: the matching lower and upper bounds imply convergence of $\ER_\varepsilon(f^n)/n$. For the per-distribution and worst-case variants, the theorem gives liminf/limsup bounds. Whether those auxiliary amortized limits always converge remains open in general.

% This file is intended to be included from Main.tex, for example via
% \input{OR_composition.tex}

\section{One-sided amortization for OR compositions}
\label{sec:or-composition}

In this section, we extend the amortized analysis to compositions of the form
$\operatorname{OR}_n\circ f$.  Let
$f:\mathcal{X}\longrightarrow\{0,1\}$,
$\mathcal{X}\subseteq\{0,1\}^{m}$,
be a Boolean partial function, and write
$\mathcal{X}_0:=f^{-1}(0)
\Mand
\mathcal{X}_1:=f^{-1}(1).$
The composed function is defined by
\[
(\operatorname{OR}_n\circ f)(x_1,\ldots,x_n)
:=\bigvee_{i=1}^{n}f(x_i).
\]
We consider algorithms with zero false-positive error and bounded
false-negative error.  More precisely, let $\mathfrak{A}_\varepsilon(f)$
denote the class of randomized query algorithms $A$ satisfying
\begin{align}
 \Pr[A(x)=1]&=0 \qquad \text{for every }x\in\mathcal{X}_0, \label{eq:one-sided-fp}\\
 \Pr[A(x)=0]&\leq \varepsilon \qquad \text{for every }x\in\mathcal{X}_1. \label{eq:one-sided-fn}
\end{align}
Thus, the error conditions are pointwise, rather than distributional.
Let $\mathcal{U}_0$ denote the set of probability distributions supported on
$\mathcal{X}_0$.

\begin{Def}[Zero-input expected query complexity]
\label{def:zero-input-query-complexity}
For $\varepsilon\in[0,1]$, define
\begin{align}
 \ER^{\mathcal{U}_0}_{\mathrm{FP}=0,\mathrm{FN}\leq\varepsilon}(f)
 &:={}
 \inf_{A\in\mathfrak{A}_\varepsilon(f)}
 \max_{\mu\in\mathcal{U}_0}
 \E_{x\sim\mu,R}[|A_R(x)|] \notag\\
 &=
 \inf_{A\in\mathfrak{A}_\varepsilon(f)}
 \max_{x\in\mathcal{X}_0}
 \E_R[|A_R(x)|].
 \label{eq:def-zero-input-query-complexity}
\end{align}
The equality follows because the expected query cost is linear in the input
distribution.
\end{Def}

For brevity, throughout this section we write
\[
 c_\varepsilon(f)
 :=\ER^{\mathcal{U}_0}_{\mathrm{FP}=0,\mathrm{FN}\leq\varepsilon}(f).
\]
We use
$\ER_{\mathrm{FP}=0,\mathrm{FN}\leq\varepsilon}(g)$ for the minimum,
over valid randomized algorithms for $g$, of the maximum expected number of
queries on an input.  Similarly,
$R_{\mathrm{FP}=0,\mathrm{FN}\leq\varepsilon}(g)$ denotes the usual
worst-case randomized query complexity under the same one-sided error
requirements.

The main result of this section is the following exact characterization.

\begin{Thm}[One-sided amortization for OR compositions]
\label{thm:or-composition}
For every Boolean partial function $f$ and every $\varepsilon\in[0,1]$,
\[
 \lim_{n\to\infty}
 \frac{
 \ER_{\mathrm{FP}=0,\mathrm{FN}\leq\varepsilon}
 (\operatorname{OR}_n\circ f)}{n}
 =c_\varepsilon(f).
\]
Moreover, for every $\varepsilon\in(0,1]$,
\[
 \lim_{n\to\infty}
 \frac{
 R_{\mathrm{FP}=0,\mathrm{FN}\leq\varepsilon}
 (\operatorname{OR}_n\circ f)}{n}
 =c_\varepsilon(f).
\]
\end{Thm}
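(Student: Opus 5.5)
We prove matching lower and upper bounds. For the lower bound we show, for every $n$, that $\ER_{\mathrm{FP}=0,\mathrm{FN}\leq\varepsilon}(\operatorname{OR}_n\circ f)\geq n\,c_\varepsilon(f)$. Since worst-case cost dominates expected cost, the same bound then holds for $R_{\mathrm{FP}=0,\mathrm{FN}\leq\varepsilon}$.

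\textbf{Lower bound.} We follow the coordinate-embedding argument from the overview and first reduce to a distributional statement. Let $\mathfrak{A}_\varepsilon(f)$ be viewed as a convex set of distributions over (depth-$\le m$) decision trees with abort-free outputs. For a zero-input distribution $\mu\in\mathcal{U}_0$ set
\[
c^\mu_\varepsilon(f):=\inf_{A\in\mathfrak{A}_\varepsilon(f)}\E_{x\sim\mu,R}[|A_R(x)|].
\]
The constraint set is convex and compact: the pointwise constraints are linear in the mixing weights over the finitely many trees. The payoff is bilinear. A minimax theorem in the spirit of \cref{Lem_Yao_exp} therefore gives $c_\varepsilon(f)=\max_{\mu\in\mathcal{U}_0}c^\mu_\varepsilon(f)$, and we fix an optimal $\mu^*$.

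Now take any valid algorithm $\mathcal{B}$ for $\operatorname{OR}_n\circ f$. Build a single-copy algorithm as follows: pick $i$ uniformly from $[n]$, sample $z_j\sim\mu^*$ for all $j\neq i$, place the real input $x$ at coordinate $i$, run $\mathcal{B}$, and output its answer. We check the three required properties.
\begin{itemize}
\item \emph{No false positives.} If $x\in\mathcal{X}_0$, the whole composed input is a zero-input, so the output is never $1$.
\item \emph{Bounded false negatives.} If $x\in\mathcal{X}_1$, the composed input is a one-input, so $\Pr[\text{output }0]\le\varepsilon$.
\item \emph{Cost.} Only queries to block $i$ are charged. For $x\sim\mu^*$ the composed input is distributed as $(\mu^*)^n$ for every $i$, so the expected charge is $\frac1n\E_{(\mu^*)^n}[|\mathcal{B}|]\le\frac1n\max_y\E[|\mathcal{B}(y)|]$.
\end{itemize}
Queries to the simulated blocks are free. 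Hence $c_\varepsilon(f)=c^{\mu^*}_\varepsilon(f)\le \ER(\operatorname{OR}_n\circ f)/n$.

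\textbf{Expected upper bound.} Fix $A\in\mathfrak{A}_\varepsilon(f)$ with zero-input cost at most $c_\varepsilon(f)+\gamma$. The algorithm has three steps.
\begin{enumerate}
\item Choose a random set $S$ of $s=n^{2/3}$ coordinates and evaluate each of them exactly, at cost at most $ms$. If some block is a one-input, output $1$.
\item Otherwise run independent copies of $A$ on the coordinates outside $S$, stopping and outputting $1$ as soon as some copy outputs $1$.
\item If no copy outputs $1$, output $0$.
\end{enumerate}
False positives never occur, because $A$ has none and exact evaluation is correct.

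For false negatives, let $x_i\in\mathcal{X}_1$. If $i\in S$ the algorithm outputs $1$. Otherwise copy $i$ of $A$ outputs $1$ with probability at least $1-\varepsilon$. Early stopping only helps, since stopping happens only when the output is $1$.

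For the cost, let $k$ be the number of one-blocks. Blocks in $\mathcal{X}_0$ cost at most $c_\varepsilon+\gamma$ each in expectation. The $k$ one-blocks cost at most $m$ each, but they are reached only if $S$ misses all of them. That happens with probability at most $(1-s/n)^k\le e^{-ks/n}$, so their expected contribution is at most $mk\,e^{-kn^{-1/3}}=O(m\,n^{1/3})$. The total is $n(c_\varepsilon+\gamma)+O(mn^{2/3})$; letting $\gamma\to 0$ along $n$ gives the upper bound.

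\textbf{Worst-case upper bound.} This is the step I expect to be the main obstacle. For $\varepsilon>0$, run the construction with parameter $\varepsilon'<\varepsilon$ and truncate it at budget $n(c_{\varepsilon'}+\gamma)+O(mn^{2/3})+k\sqrt n\,m$, outputting $0$ on truncation. Truncating to $0$ preserves $\mathrm{FP}=0$.

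On a one-input, the truncation event raises the false-negative probability by at most the tail probability. By Chebyshev applied to the sum of independent bounded costs, as in \cref{Lem_algo_CR}, this is $O(1/k^2)$. A subtle point is that the one-blocks are not covered by Chebyshev. I handle this by charging them at $m$ each: if $k\ge n^{1/3}\log n$ they are caught by $S$ with high probability, and otherwise they add at most $m\,n^{1/3}\log n=o(n)$.

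The error bound then becomes $\varepsilon'+O(1/k^2)\le\varepsilon$. What remains is to show $c_{\varepsilon'}(f)\to c_\varepsilon(f)$ as $\varepsilon'\nearrow\varepsilon$. I would prove this by mixing, as in \cref{Lem_conti_CD}: with probability $\theta$ run an exact evaluator of cost at most $m$, and otherwise run an $\varepsilon$-algorithm. This yields error $(1-\theta)\varepsilon$ at cost at most $c_\varepsilon+\theta m$, so $c_{(1-\theta)\varepsilon}\le c_\varepsilon+\theta m$. Letting $\theta\to0$ and $k\to\infty$ slowly completes the proof.
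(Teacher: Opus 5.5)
Your proposal is correct and follows essentially the same route as the paper: the one-sided minimax identity plus uniform-coordinate embedding for the exact lower bound $n\,c_\varepsilon(f)$, and random inspection of about $n^{2/3}$ blocks followed by independent single-copy runs for the expected upper bound; for the worst-case bound, both use continuity of $c_{\varepsilon'}(f)$ from below (by the same mixing with the exact evaluator) plus truncation. The differences are cosmetic. You use Chebyshev with a fixed multiplier that is later sent to infinity, and a one-block threshold of $n^{1/3}\log n$, where the paper uses Hoeffding with deviation $n^{2/3}$ and threshold $n^{2/3}$. Your early stopping is harmless, since the truncation event is contained in the event that the full, non-stopping run exceeds the budget. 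You should, however, stop using $k$ both for the Chebyshev multiplier and for the number of one-blocks.
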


The proof has two main ingredients.  The lower bound follows from a standard
coordinate embedding together with a minimax identity.  For the upper bound,
we first inspect a sublinear random set of coordinates exactly and then run a
single-copy algorithm on the remaining coordinates.

\subsection{A minimax identity}

The coordinate embedding below naturally constructs, for each
$\mu\in\mathcal{U}_0$, a single-copy algorithm that may depend on $\mu$.
We therefore record the relevant minimax identity.

\begin{Lem}[One-sided minimax identity]
\label{lem:one-sided-minimax}
For every $\varepsilon\in[0,1]$,
\[
 c_\varepsilon(f)
 =
 \max_{\mu\in\mathcal{U}_0}
 \inf_{A\in\mathfrak{A}_\varepsilon(f)}
 \E_{x\sim\mu,R}[|A_R(x)|].
\]
\end{Lem}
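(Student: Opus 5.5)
The plan is to prove the identity by von Neumann's minimax theorem applied to a finite bilinear game, exactly as in the proof of \cref{Lem_Yao_exp}. The easy direction is $c_\varepsilon(f)\geq \max_{\mu\in\mathcal{U}_0}\inf_{A}\E_{x\sim\mu,R}[|A_R(x)|]$. For any fixed $A\in\mathfrak{A}_\varepsilon(f)$ and any $\mu\in\mathcal{U}_0$ we have $\E_{x\sim\mu,R}[|A_R(x)|]\leq \max_{x\in\mathcal{X}_0}\E_R[|A_R(x)|]$. Taking the infimum over $A$ on the right first and then on the left gives the inequality. Taking the supremum over $\mu$ then finishes this direction; that the supremum is attained will come out of the other direction.

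For the reverse inequality, we will set up the game. Without loss of generality, deterministic trees make at most $m$ queries, so there is a finite set $\mathcal{T}$ of deterministic decision trees on $\{0,1\}^m$ with outputs in $\{0,1\}$. A randomized algorithm is a distribution $p$ on $\mathcal{T}$. The constraints \eqref{eq:one-sided-fp} and \eqref{eq:one-sided-fn} are linear in $p$. The first says that $p$ is supported on the set $\mathcal{T}_0$ of trees outputting $0$ on every $x\in\mathcal{X}_0$; it is a pointwise condition, and a tree in the support that outputs $1$ on some zero-input would give positive false-positive probability. The second is a finite family of linear inequalities $\sum_T p(T)\mathbf{1}[T(x)=0]\leq\varepsilon$ for $x\in\mathcal{X}_1$. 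Hence $\mathfrak{A}_\varepsilon(f)$ corresponds to a compact convex polytope $P\subseteq\Delta(\mathcal{T}_0)$. It is nonempty, since the tree querying everything and outputting $f(x)$ lies in it.

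The payoff $(p,\mu)\mapsto\sum_T\sum_x p(T)\mu(x)|T(x)|$ is bilinear and continuous on $P\times\mathcal{U}_0$. Both $P$ and $\mathcal{U}_0$ are compact convex sets, and $\mathcal{U}_0$ is the simplex over the finite set $\mathcal{X}_0$, which is nonempty by assumption. Von Neumann's (or Sion's) minimax theorem then gives
\[
\min_{p\in P}\max_{\mu\in\mathcal{U}_0}\E_{\mu,p}[|T(x)|]=\max_{\mu\in\mathcal{U}_0}\min_{p\in P}\E_{\mu,p}[|T(x)|].
\]
The left side is $c_\varepsilon(f)$ by \eqref{eq:def-zero-input-query-complexity}, and the infimum there is actually attained by compactness.

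The main care point is the difference from \cref{Lem_Yao_exp}. Here the inner minimum cannot be reduced to deterministic trees, because the false-negative constraint genuinely requires randomization. So we must keep the minimization over the polytope $P$ rather than over its vertices, and use the version of the minimax theorem for general compact convex strategy sets rather than mixed strategies over finite pure sets. We will also check that the inner $\min_{p\in P}$ equals the $\inf_{A\in\mathfrak{A}_\varepsilon(f)}$ in the statement. This holds because every algorithm in $\mathfrak{A}_\varepsilon(f)$ can be truncated to depth $m$ without increasing its cost or changing its outputs, so it is represented by some $p\in P$.
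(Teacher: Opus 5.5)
Your proposal is correct and takes essentially the same route as the paper. Both proofs identify randomized algorithms with distributions over the finite set of depth-$\leq m$ trees, observe that the one-sided constraints cut out a compact convex polytope, and apply the finite-dimensional minimax theorem to the bilinear expected-cost payoff; your extra remarks (nonemptiness of the polytope, keeping the inner minimum over the whole polytope rather than its vertices, and matching $\inf_A$ with $\min_p$) spell out details the paper leaves implicit.
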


\begin{proof}
A deterministic decision tree may be assumed never to query the same input
bit twice.  Hence every relevant deterministic tree has depth at most $m$,
and there are only finitely many such trees.  A randomized query algorithm can
therefore be identified with a probability distribution $p$ over a finite set
of deterministic trees.

The constraints in \eqref{eq:one-sided-fp} and \eqref{eq:one-sided-fn} are
linear constraints on $p$.  Consequently, the set $\mathcal{P}_\varepsilon$
of admissible randomized algorithms is a compact convex polytope.  For a
deterministic tree $T$, let $q_T(x)$ be the number of queries made by $T$ on
$x$.  Then
\[
 c_\varepsilon(f)
 =
 \min_{p\in\mathcal{P}_\varepsilon}
 \max_{\mu\in\mathcal{U}_0}
 \sum_{T,x}p(T)\mu(x)q_T(x).
\]
The payoff is bilinear in $p$ and $\mu$.  The finite-dimensional minimax
theorem therefore gives
\[
 \min_{p\in\mathcal{P}_\varepsilon}
 \max_{\mu\in\mathcal{U}_0}
 \sum_{T,x}p(T)\mu(x)q_T(x)
 =
 \max_{\mu\in\mathcal{U}_0}
 \min_{p\in\mathcal{P}_\varepsilon}
 \sum_{T,x}p(T)\mu(x)q_T(x),
\]
which is the claimed identity.
\end{proof}

\subsection{The embedding lower bound}

The lower bound is exact for every number of coordinates.

\begin{Prop}[Embedding lower bound]
\label{prop:or-embedding-lower-bound}
For every $n\geq 1$ and every $\varepsilon\in[0,1]$,
\[
 \ER_{\mathrm{FP}=0,\mathrm{FN}\leq\varepsilon}
 (\operatorname{OR}_n\circ f)
 \geq n c_\varepsilon(f).
\]
Consequently,
\[
 R_{\mathrm{FP}=0,\mathrm{FN}\leq\varepsilon}
 (\operatorname{OR}_n\circ f)
 \geq n c_\varepsilon(f).
\]
\end{Prop}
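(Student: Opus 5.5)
The plan is a coordinate embedding combined with \cref{lem:one-sided-minimax}. Fix an algorithm $B$ for $\operatorname{OR}_n\circ f$ with zero false-positive error and false-negative error at most $\varepsilon$, and with $\max_{y}\E_R[|B_R(y)|]\leq \ER_{\mathrm{FP}=0,\mathrm{FN}\leq\varepsilon}(\operatorname{OR}_n\circ f)+\eta$. By \cref{lem:one-sided-minimax}, it suffices to fix an arbitrary $\mu\in\mathcal{U}_0$ and exhibit some $A\in\mathfrak{A}_\varepsilon(f)$ with $\E_{x\sim\mu,R}[|A_R(x)|]\leq (\max_y\E[|B(y)|])/n$. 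Taking the maximizing $\mu$ then gives $c_\varepsilon(f)\leq (\ER_{\mathrm{FP}=0,\mathrm{FN}\leq\varepsilon}(\operatorname{OR}_n\circ f)+\eta)/n$, and we let $\eta\to0$. The worst-case bound follows immediately, since expected cost is at most worst-case cost.

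The single-copy algorithm $A$ is built as follows. On input $x$, sample $i\in[n]$ uniformly at random, and sample $z_j\sim\mu$ independently for each $j\neq i$. Then run $B$ on $y=(z_1,\ldots,z_{i-1},x,z_{i+1},\ldots,z_n)$, charging only the queries $B$ makes to block $i$; queries to the other blocks are answered from the sampled $z_j$ at no cost. Finally, $A$ outputs whatever $B$ outputs.

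Checking the error conditions is pointwise and easy.
\begin{itemize}
\item If $x\in\mathcal{X}_0$, every block of $y$ lies in $\mathcal{X}_0$, because $\mu$ is supported on $\mathcal{X}_0$. So $(\operatorname{OR}_n\circ f)(y)=0$, and $B$ never outputs $1$.
\item If $x\in\mathcal{X}_1$, then $(\operatorname{OR}_n\circ f)(y)=1$ for every choice of $i$ and of the $z_j$. So $\Pr[B(y)=0]\leq\varepsilon$ conditionally on each such choice, and hence overall.
\end{itemize}
Therefore $A\in\mathfrak{A}_\varepsilon(f)$. This is why no aborts are needed here, in contrast with \cref{Prop_Lower-bound}.

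For the cost, take $x\sim\mu$. Then $y\sim\mu^n$ regardless of $i$, and $i$ is independent of $y$. Writing $q_j(y,r)$ for the number of queries $B$ makes to block $j$, we get
\[
\E_{x\sim\mu,R}[|A_R(x)|]=\frac1n\sum_{j}\E_{y\sim\mu^n,R}[q_j(y,R)]=\frac1n\E_{y\sim\mu^n,R}[|B_R(y)|]\leq\frac1n\max_y\E_R[|B_R(y)|].
\]
The only step needing care is the distributional-independence claim, namely that placing $x\sim\mu$ at a uniformly random position yields exactly $\mu^n$ jointly with a uniform, independent $i$; this holds because all $n$ coordinates are i.i.d. from $\mu$. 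I expect the main obstacle to be precisely the reduction to a single $\mu$: $A$ depends on $\mu$, so we need the max–inf form of \cref{lem:one-sided-minimax}, not the inf–max definition of $c_\varepsilon(f)$.
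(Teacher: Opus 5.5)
Your proposal is correct and follows essentially the same route as the paper: embed $x$ at a uniformly random coordinate, fill the others i.i.d.\ from $\mu\in\mathcal{U}_0$, check the one-sided error pointwise, average the per-block query counts to get the $\frac1n$ bound, and use \cref{lem:one-sided-minimax} to pass from the $\mu$-dependent single-copy algorithms to $c_\varepsilon(f)$. The only cosmetic difference is that you fix a near-optimal $B$ with slack $\eta$ at the outset, whereas the paper takes the infimum over $B$ at the end.
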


\begin{proof}
Let $B$ be a randomized query algorithm for
$\operatorname{OR}_n\circ f$ with zero false-positive error and
false-negative error at most $\varepsilon$.  Fix a distribution
$\mu\in\mathcal{U}_0$.  We construct a single-copy algorithm $A_\mu$ for $f$.
On input $x\in\mathcal{X}$, the algorithm proceeds as follows.

\begin{enumerate}
 \item Choose $I$ uniformly from $[n]$.
 \item Independently sample $Z_j\sim\mu$ for every $j\neq I$.
 \item Simulate $B$ on
 \[
  Y^n=(Z_1,\ldots,Z_{I-1},x,Z_{I+1},\ldots,Z_n).
 \]
 Queries made by $B$ to a sampled block $Z_j$ are answered internally, while
 queries to the $I$th block are made to the actual input $x$.
 \item Output the answer produced by $B$.
\end{enumerate}

If $f(x)=0$, then every coordinate of $Y^n$ is a zero-input of $f$.
Therefore $(\operatorname{OR}_n\circ f)(Y^n)=0$, and the zero
false-positive property of $B$ implies that $A_\mu$ never outputs $1$.
If $f(x)=1$, then the $I$th coordinate is a one-input, so
$(\operatorname{OR}_n\circ f)(Y^n)=1$.  Hence $A_\mu$ outputs $0$ with
probability at most $\varepsilon$.  Thus
$A_\mu\in\mathfrak{A}_\varepsilon(f)$.

For an input $y^n$ and random string $r$, let $q_i(y^n,r)$ denote the number
of queries that $B_r$ makes to the $i$th block.  When $x\sim\mu$, the vector
$Y^n$ has distribution $\mu^n$.  It follows that
\begin{align*}
 \E_{x\sim\mu,R}[|A_{\mu,R}(x)|]
 &=
 \frac{1}{n}\sum_{i=1}^{n}
 \E_{Y^n\sim\mu^n,R}[q_i(Y^n,R)]\\
 &=
 \frac{1}{n}
 \E_{Y^n\sim\mu^n,R}[|B_R(Y^n)|]\\
 &\leq
 \frac{1}{n}\max_{y^n}
 \E_R[|B_R(y^n)|].
\end{align*}
Therefore,
\[
 \inf_{A\in\mathfrak{A}_\varepsilon(f)}
 \E_{x\sim\mu,R}[|A_R(x)|]
 \leq
 \frac{1}{n}\max_{y^n}\E_R[|B_R(y^n)|].
\]
Taking the maximum over $\mu\in\mathcal{U}_0$ and applying
\cref{lem:one-sided-minimax} gives
\[
 c_\varepsilon(f)
 \leq
 \frac{1}{n}\max_{y^n}\E_R[|B_R(y^n)|].
\]
Finally, taking the infimum over all valid algorithms $B$ yields
\[
 \ER_{\mathrm{FP}=0,\mathrm{FN}\leq\varepsilon}
 (\operatorname{OR}_n\circ f)
 \geq n c_\varepsilon(f).
\]
Since worst-case query cost dominates expected query cost, the second claim
follows as well.
\end{proof}

\subsection{The expected-cost upper bound}

We now prove a matching upper bound.  The random subsampling step ensures that
inputs containing many one-coordinates are detected with high probability,
while inputs containing few one-coordinates incur only a sublinear additional
cost.

\begin{Prop}[Expected-cost upper bound]
\label{prop:or-expected-upper-bound}
For every $\varepsilon\in[0,1]$,
\[
 \limsup_{n\to\infty}
 \frac{
 \ER_{\mathrm{FP}=0,\mathrm{FN}\leq\varepsilon}
 (\operatorname{OR}_n\circ f)}{n}
 \leq c_\varepsilon(f).
\]
\end{Prop}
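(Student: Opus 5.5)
The plan is to build, for each $n$, an explicit algorithm $B_n$ for $\operatorname{OR}_n\circ f$ with zero false-positive error, false-negative error at most $\varepsilon$, and $\max_{y^n}\E[|B_n(y^n)|]\leq n(c_\varepsilon(f)+\gamma_n)+o(n)$, where $\gamma_n\to0$. For each $n$, fix a single-copy algorithm $A\in\mathfrak{A}_\varepsilon(f)$ with $\max_{x\in\mathcal{X}_0}\E_R[|A_R(x)|]\leq c_\varepsilon(f)+\gamma_n$; such an $A$ exists by the definition of $c_\varepsilon(f)$. Every deterministic tree may be assumed to have depth at most $m$, so $A$ also makes at most $m$ queries on one-inputs. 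The algorithm $B_n$ has two phases.
\begin{enumerate}
\item \emph{Inspection.} Choose a uniformly random set $S\subseteq[n]$ of size $s:=\lceil n^{2/3}\rceil$ and query every bit of each block in $S$. Since each such block lies in $\mathcal{X}$, this determines its value under $f$. If some inspected block is a one-input, output $1$ and stop.
\item \emph{Main phase.} Otherwise, run independent copies of $A$ on every block outside $S$. Output $1$ if some copy outputs $1$, and output $0$ otherwise.
\end{enumerate}

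\textbf{Correctness.} If every block is a zero-input, the inspection phase cannot output $1$. Every copy of $A$ also has zero false-positive error, so $B_n$ never outputs $1$. Now suppose some block is a one-input. If the inspection phase sees such a block, $B_n$ outputs $1$ surely. Otherwise a one-block $i\notin S$ exists, and the copy of $A$ on block $i$ alone outputs $1$ with probability at least $1-\varepsilon$. Hence $B_n$ has false-negative error at most $\varepsilon$.

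\textbf{Cost.} Fix an input $y^n$ and let $k$ be its number of one-blocks. The inspection phase always costs at most $sm=O(n^{2/3})$. In the main phase, zero-blocks contribute at most $n(c_\varepsilon(f)+\gamma_n)$ in expectation, and one-blocks contribute at most $km$. I will split on the size of $k$, with threshold $\sqrt n$.
\begin{itemize}
\item If $k\leq\sqrt n$, the extra cost from one-blocks is at most $m\sqrt n=o(n)$.
\item If $k>\sqrt n$, the main phase runs only when $S$ misses all $k$ one-blocks. Sampling without replacement, this happens with probability at most $(1-k/n)^{s}\leq\exp(-ks/n)\leq\exp(-n^{1/6})$. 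On that event the main phase costs at most $nm$, so its expected contribution is at most $nm\,e^{-n^{1/6}}=o(1)$.
\end{itemize}
In both cases
\[
\E[|B_n(y^n)|]\leq n(c_\varepsilon(f)+\gamma_n)+O(n^{2/3})+m\sqrt n+o(1).
\]
Dividing by $n$ and letting $n\to\infty$ gives the claimed $\limsup$ bound.

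\textbf{Main obstacle.} The delicate step is balancing the inspection size $s$ against the threshold on $k$. The inspection must be sublinear in $n$, yet large enough that any input with more than $\sqrt n$ one-blocks is caught except with probability $o(1/n)$. The choices $s=n^{2/3}$ and threshold $\sqrt n$ satisfy both constraints, because $ks/n\geq n^{1/6}\to\infty$. The edge cases need no separate treatment: for $\varepsilon=1$ the same construction works, and $\mathcal{X}_0\neq\emptyset$ ensures $c_\varepsilon(f)$ is finite.
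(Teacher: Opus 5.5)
Your proposal is correct and follows the paper's proof essentially step for step. You use the same exact inspection of $\lceil n^{2/3}\rceil$ random blocks, then independent copies of a near-optimal single-copy algorithm on the remaining blocks, with the same zero-false-positive and false-negative arguments. The only difference is bookkeeping in the cost bound: you split on $k\le\sqrt n$ versus $k>\sqrt n$, whereas the paper bounds the one-block overhead uniformly by $mk\exp(-sk/n)\le mn/s$ using $ue^{-u}\le 1$, and both give an $o(n)$ overhead.
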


\begin{proof}
Fix $\eta>0$, and choose
$A\in\mathfrak{A}_\varepsilon(f)$ satisfying
\[
 \max_{x\in\mathcal{X}_0}\E_R[|A_R(x)|]
 \leq c_\varepsilon(f)+\eta.
\]
We may assume that $A$ makes at most $m$ queries, since repeated queries can be
removed.

Set $s:=\lceil n^{2/3}\rceil$.  We construct an algorithm $B_n$ for
$\operatorname{OR}_n\circ f$ as follows.

\begin{enumerate}
 \item Choose a uniformly random subset $J\subseteq[n]$ of size $s$.
 \item Query all $m$ bits of every block indexed by $J$.  If one of these
 blocks is a one-input of $f$, output $1$ and halt.
 \item Otherwise, run independent copies of $A$ on every block indexed by
 $J^c$ and output the OR of their answers.
\end{enumerate}

If every block is a zero-input, then the exact inspection in the second step
finds no one-input and every copy of $A$ outputs $0$ with probability one.
Thus $B_n$ has zero false-positive error.

Suppose that the input contains $k\geq1$ one-input blocks.  If one of them lies
in $J$, then $B_n$ outputs $1$ exactly.  Otherwise, all $k$ one-input blocks
remain in $J^c$, and the copies of $A$ use independent randomness.  Therefore,
the probability that all of them output $0$ is at most
$\varepsilon^k\leq\varepsilon$.  Hence the false-negative error of $B_n$ is at
most $\varepsilon$.

It remains to bound the expected query cost.
Let \(p_k\) denote the probability that \(J\) contains no one-input
block. If \(s\leq n-k\), then
\begin{align}
 p_k
 &=\frac{\binom{n-k}{s}}{\binom{n}{s}}
   =\prod_{j=0}^{s-1}\left(1-\frac{k}{n-j}\right) \notag\\
 &\leq \left(1-\frac{k}{n}\right)^s
 \leq \exp\left(-\frac{sk}{n}\right).
 \label{eq:or-miss-probability}
\end{align}
If \(s>n-k\), then \(p_k=0\), and the same bound holds trivially.

The exact inspection uses at most \(ms\) queries. Conditioned on the
event that \(J\) contains no one-input block, the set \(J^c\) contains
\(k\) one-input blocks and \(n-s-k\) zero-input blocks. The expected
cost of the final step is therefore at most
\[
 (n-s-k)(c_\varepsilon(f)+\eta)+km
 \leq (n-k)(c_\varepsilon(f)+\eta)+km.
\]
Since the final step is performed only when \(J\) contains no
one-input block, it follows that
\begin{align*}
 \E_R[|B_{n,R}|]
 &\leq
 ms+p_k\bigl((n-k)(c_\varepsilon(f)+\eta)+km\bigr)\\
 &\leq
 n(c_\varepsilon(f)+\eta)+ms
 +mk\exp\left(-\frac{sk}{n}\right).
\end{align*}
Since $u e^{-u}\leq1$ for every $u\geq0$, we have
\[
 k\exp\left(-\frac{sk}{n}\right)\leq\frac{n}{s}.
\]
Thus, uniformly over all inputs,
\[
 \E_R[|B_{n,R}|]
 \leq
 n(c_\varepsilon(f)+\eta)
 +m\left(s+\frac{n}{s}\right)
 =n(c_\varepsilon(f)+\eta)+o(n).
\]
Dividing by $n$, taking the limit superior, and then letting $\eta\to0$ proves
the proposition.
\end{proof}

Combining \cref{prop:or-embedding-lower-bound,prop:or-expected-upper-bound}
proves the expected-cost statement in \cref{thm:or-composition}.

\subsection{The worst-case-cost upper bound}

To obtain a worst-case query bound, we truncate the preceding expected-cost
algorithm.  The truncation always outputs $0$, so it can introduce only
false-negative error.  We first record continuity from below in the allowable
false-negative probability.

\begin{Lem}[Continuity in the false-negative parameter]
\label{lem:or-continuity}
For every $\varepsilon\in(0,1]$,
\[
 \lim_{\varepsilon'\uparrow\varepsilon}c_{\varepsilon'}(f)
 =c_\varepsilon(f).
\]
\end{Lem}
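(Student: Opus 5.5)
The plan is to prove continuity from below by a mixing argument, in the spirit of \cref{Lem_conti_CD}. We get one inequality for free from monotonicity and the other from interpolating between a near-optimal $\varepsilon$-algorithm and the trivial zero-error algorithm.

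\textbf{Monotonicity.} Since $\mathfrak{A}_{\varepsilon'}(f)\subseteq\mathfrak{A}_\varepsilon(f)$ for $\varepsilon'\leq\varepsilon$, the map $\varepsilon'\mapsto c_{\varepsilon'}(f)$ is nonincreasing. Hence $c_{\varepsilon'}(f)\geq c_\varepsilon(f)$ for all $\varepsilon'<\varepsilon$, and the limit from below exists and is at least $c_\varepsilon(f)$. It remains to show that the limit is at most $c_\varepsilon(f)$.

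\textbf{Interpolation.} Fix $\eta>0$ and choose $A\in\mathfrak{A}_\varepsilon(f)$ with $\max_{x\in\mathcal{X}_0}\E_R[|A_R(x)|]\leq c_\varepsilon(f)+\eta$. Let $E$ be the exact algorithm that queries all $m$ bits and outputs $f(x)$. It has zero error of both kinds and cost at most $m$; since $f$ is only partial, $E$ may output arbitrarily on inputs outside $\mathcal{X}$, which are irrelevant.

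For $\lambda\in[0,1]$, let $A^{(\lambda)}$ run $A$ with probability $\lambda$ and $E$ with probability $1-\lambda$. Then $A^{(\lambda)}$ inherits zero false-positive error, since both components have it. On every one-input, its false-negative probability is at most $\lambda\varepsilon$.

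Given $\varepsilon'\in(0,\varepsilon)$, take $\lambda=\varepsilon'/\varepsilon$, which is legitimate because $\varepsilon>0$. Then $A^{(\lambda)}\in\mathfrak{A}_{\varepsilon'}(f)$, and its maximum expected cost on zero-inputs is at most
\[
\lambda(c_\varepsilon(f)+\eta)+(1-\lambda)m .
\]
This gives
\[
c_{\varepsilon'}(f)\leq \frac{\varepsilon'}{\varepsilon}\bigl(c_\varepsilon(f)+\eta\bigr)+\Bigl(1-\frac{\varepsilon'}{\varepsilon}\Bigr)m .
\]

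\textbf{Taking limits.} Let $\varepsilon'\uparrow\varepsilon$, so the coefficient $1-\varepsilon'/\varepsilon$ on $m$ vanishes and the bound tends to $c_\varepsilon(f)+\eta$. Then let $\eta\to0$. Combined with monotonicity, this gives $\lim_{\varepsilon'\uparrow\varepsilon}c_{\varepsilon'}(f)=c_\varepsilon(f)$.

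There is no real obstacle here. The only points needing care are that $\varepsilon>0$, which is needed for the interpolation ratio to make sense, and that the trivial algorithm $E$ is admissible with bounded cost $m$. This is where finiteness of the input length is used.
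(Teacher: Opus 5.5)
Your proposal is correct and follows the paper's proof essentially step for step. You use monotonicity for the lower bound. For the upper bound you mix a near-optimal $\varepsilon$-algorithm, run with probability $\lambda=\varepsilon'/\varepsilon$, with the exact $m$-query algorithm, and then let $\varepsilon'\uparrow\varepsilon$ followed by $\eta\to0$.
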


\begin{proof}
Monotonicity gives
$c_{\varepsilon'}(f)\geq c_\varepsilon(f)$ whenever
$\varepsilon'<\varepsilon$.  Fix $\eta>0$, and choose
$A\in\mathfrak{A}_\varepsilon(f)$ satisfying
\[
 \max_{x\in\mathcal{X}_0}\E_R[|A_R(x)|]
 \leq c_\varepsilon(f)+\eta.
\]
Let $Z$ be the deterministic zero-error algorithm that queries all $m$ input
bits and evaluates $f$ exactly.  For $0<\varepsilon'<\varepsilon$, run $A$
with probability
\[
 \lambda:=\frac{\varepsilon'}{\varepsilon}
\]
and run $Z$ otherwise.  The resulting algorithm has zero false-positive error
and false-negative error at most
$\lambda\varepsilon=\varepsilon'$.  Its maximum expected cost on zero-inputs
is at most
\[
 \lambda(c_\varepsilon(f)+\eta)+(1-\lambda)m.
\]
Taking $\varepsilon'\uparrow\varepsilon$ and then $\eta\to0$ proves the claim.
The case $\varepsilon=1$ is included in the same argument.
\end{proof}

\begin{Prop}[Worst-case-cost upper bound]
\label{prop:or-worst-case-upper-bound}
For every $\varepsilon\in(0,1]$,
\[
 \limsup_{n\to\infty}
 \frac{
 R_{\mathrm{FP}=0,\mathrm{FN}\leq\varepsilon}
 (\operatorname{OR}_n\circ f)}{n}
 \leq c_\varepsilon(f).
\]
\end{Prop}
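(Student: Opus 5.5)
We truncate the algorithm $B_n$ of \cref{prop:or-expected-upper-bound}, built from a single-copy algorithm with a slightly smaller false-negative parameter. Fix $\varepsilon\in(0,1]$, $\eta>0$ and $\varepsilon'<\varepsilon$. By \cref{lem:or-continuity}, we may choose $\varepsilon'$ so that $c_{\varepsilon'}(f)\le c_\varepsilon(f)+\eta$. Then take $A\in\mathfrak{A}_{\varepsilon'}(f)$ with $\max_{x\in\mathcal{X}_0}\E_R[|A_R(x)|]\le c_{\varepsilon}(f)+2\eta$, and assume $A$ makes at most $m$ queries. Let $B_n$ be exactly the algorithm from the proof of \cref{prop:or-expected-upper-bound}: exact inspection of a random set $J$ of $s=\lceil n^{2/3}\rceil$ blocks, followed by independent copies of $A$ on $J^c$. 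Define $B_n'$ to run $B_n$, but to halt and output $0$ as soon as the total number of queries reaches
\[
T_n:=ms+n(c_\varepsilon(f)+2\eta)+n^{3/4}.
\]
Outputting $0$ can never create a false positive, so $B_n'$ keeps zero false-positive error. Its worst-case cost is $T_n=n(c_\varepsilon(f)+2\eta)+o(n)$. Hence it remains only to show that the false-negative error of $B_n'$ is at most $\varepsilon$ for all large $n$; we then let $\eta\to0$.

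Fix a one-input $y^n$ with $k\ge1$ one-blocks. A false negative of $B_n'$ requires either that $B_n$ itself outputs $0$, or that the truncation fires. By the analysis of \cref{prop:or-expected-upper-bound}, $B_n$ outputs $0$ with probability at most $\varepsilon'^k\le\varepsilon'$. For truncation, condition on $J$; the inspection costs at most $ms$. On the final step, the cost is a sum of independent per-block costs, each in $[0,m]$.
\begin{itemize}
\item The zero-blocks contribute total mean at most $n(c_\varepsilon(f)+2\eta)$.
\item The one-blocks contribute at most $km$.
\end{itemize}
We split into two cases.

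\emph{Case $k\le n^{1/2}$.} The one-block cost is at most $mn^{1/2}=o(n^{3/4})$. By Hoeffding (or Chebyshev, with variance at most $nm^2$), the zero-block sum exceeds its mean by more than $n^{3/4}/2$ with probability $O(n^{-1/2})$, uniformly over inputs.

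\emph{Case $k>n^{1/2}$.} The final step is only reached when $J$ misses all one-blocks. By \eqref{eq:or-miss-probability}, this happens with probability at most $\exp(-sk/n)\le\exp(-n^{1/6})$.

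In both cases the truncation probability is $\tau_n=o(1)$ uniformly over inputs, so the false-negative error is at most $\varepsilon'+\tau_n\le\varepsilon$ once $n$ is large.

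The main obstacle is exactly this trade-off: the slack $\varepsilon-\varepsilon'$ must absorb the truncation error while costing only $\eta$ in query rate, and the one-blocks, whose cost is not controlled by $c_\varepsilon(f)$, must be handled uniformly in $k$. \Cref{lem:or-continuity} supplies the first point, which is why we need $\varepsilon>0$. The split at $k=n^{1/2}$ between a concentration bound and the subsampling-miss bound supplies the second. Dividing by $n$ and letting $\eta\to0$ gives the claim; combined with \cref{prop:or-embedding-lower-bound}, this yields the worst-case part of \cref{thm:or-composition}.
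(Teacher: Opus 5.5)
Your proposal is correct and follows essentially the same route as the paper. You truncate the subsampling algorithm built from a single-copy algorithm with a slightly smaller false-negative parameter $\varepsilon'$, split on the number $k$ of one-blocks (the subsampling-miss bound for large $k$, concentration for small $k$), and finish with \cref{lem:or-continuity}. The remaining differences are cosmetic: you fix $\varepsilon'$ up front rather than letting $\varepsilon'\uparrow\varepsilon$ at the end, you split at $k=n^{1/2}$ with slack $n^{3/4}$ instead of at $K=\lceil n^{2/3}\rceil$ with slack $mK+n^{2/3}$, and you allow Chebyshev in place of Hoeffding.
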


\begin{proof}
For $\varepsilon=1$, the algorithm that always outputs $0$ makes no queries,
and both sides of the claimed inequality are zero.  We may therefore assume
that $0<\varepsilon<1$.

\paragraph{Construction and query cost.}
Fix $0<\varepsilon'<\varepsilon$ and $\eta>0$.  Choose
$A\in\mathfrak{A}_{\varepsilon'}(f)$ such that
\[
 \max_{x\in\mathcal{X}_0}\E_R[|A_R(x)|]
 \leq c_{\varepsilon'}(f)+\eta.
\]
As before, we may assume that $A$ makes at most $m$ queries.  Apply the
subsampling algorithm from the proof of
\cref{prop:or-expected-upper-bound} with
\[
 s:=\lceil n^{2/3}\rceil.
\]
Set $K:=\lceil n^{2/3}\rceil$ and truncate the algorithm, outputting $0$,
just before its total number of queries would exceed
\[
 L_n
 :=n(c_{\varepsilon'}(f)+\eta)+ms+mK+n^{2/3}.
\]
Thus its worst-case query cost is at most $L_n$, where
\[
 L_n=n(c_{\varepsilon'}(f)+\eta)+O(n^{2/3}).
\]

\paragraph{Truncation probability.}
We show that the probability of truncation is $o(1)$ uniformly over all
inputs.  Let $k$ be the number of one-input blocks.

First suppose that $k>K$.  The algorithm reaches the final execution stage
only if the sampled set misses every one-input block.  By
\eqref{eq:or-miss-probability}, the probability of this event is at most
\[
 \exp\left(-\frac{sk}{n}\right)
 \leq \exp(-\Omega(n^{1/3})).
\]
Thus the truncation probability is exponentially small in this case.

Now suppose that $k\leq K$.  Fix a sampled set $J$ such that
$f(x_i)=0$ for every $i\in J$.  For each $i\in J^c$, let $X_i$ denote the
number of queries made by $A$ on input $x_i$, and set
\[
 S_J:=\sum_{i\in J^c}X_i.
\]
Conditional on $J$, the random variables $X_i$ are independent and satisfy
$0\leq X_i\leq m$.  Since $J^c$ contains $k$ one-input blocks,
\[
 \E[S_J\mid J]
 \leq (n-s-k)(c_{\varepsilon'}(f)+\eta)+km
 \leq n(c_{\varepsilon'}(f)+\eta)+mK.
\]
The exact inspection of the sampled blocks uses at most $ms$ queries.
Consequently, truncation can occur only if
\[
 S_J>L_n-ms
 =n(c_{\varepsilon'}(f)+\eta)+mK+n^{2/3},
\]
which implies
\[
 S_J-\E[S_J\mid J]>n^{2/3}.
\]

We use Hoeffding's inequality in the following form: if
$X_1,\ldots,X_N$ are independent and $a_i\leq X_i\leq b_i$, then
\[
 \Pr\left[
 \sum_{i=1}^N X_i-\E\left[\sum_{i=1}^N X_i\right]\geq t
 \right]
 \leq
 \exp\left(-\frac{2t^2}{\sum_{i=1}^N(b_i-a_i)^2}\right).
\]
Conditional on $J$, apply this inequality with
$N=n-s$, $a_i=0$, $b_i=m$, and $t=n^{2/3}$.  We obtain
\[
 \Pr\left[
 S_J-\E[S_J\mid J]>n^{2/3}\,\middle|\,J
 \right]
 \leq
 \exp\left(-\frac{2n^{4/3}}{(n-s)m^2}\right)
 \leq
 \exp\left(-\frac{2n^{1/3}}{m^2}\right)
 =o(1).
\]
This bound holds for every $J$ containing no one-input block.  If $J$
contains a one-input block, the algorithm halts during the exact inspection
and cannot be truncated.  Hence the overall truncation probability is
$o(1)$ uniformly over all inputs.

\paragraph{Error probability.}
Before truncation, the algorithm has zero false-positive error and
false-negative error at most $\varepsilon'$.  Since truncation outputs $0$,
it cannot create a false positive and increases the false-negative
probability by at most the truncation probability.  The false-negative error
of the truncated algorithm is therefore at most
\[
 \varepsilon'+o(1).
\]
Since $\varepsilon'<\varepsilon$, this is at most $\varepsilon$ for all
sufficiently large $n$.

\paragraph{Conclusion.}
The truncated algorithm is therefore valid for all sufficiently large $n$
and has worst-case query cost at most $L_n$.  It follows that
\[
 \limsup_{n\to\infty}
 \frac{
 R_{\mathrm{FP}=0,\mathrm{FN}\leq\varepsilon}
 (\operatorname{OR}_n\circ f)}{n}
 \leq c_{\varepsilon'}(f)+\eta.
\]
Letting $\eta\to0$ and then $\varepsilon'\uparrow\varepsilon$, and applying
\cref{lem:or-continuity}, proves the proposition.
\end{proof}

The worst-case statement of \cref{thm:or-composition} follows by combining
\cref{prop:or-embedding-lower-bound,prop:or-worst-case-upper-bound}.

\begin{Rem}
The expected-cost identity remains valid at $\varepsilon=0$.  The worst-case
argument above requires $\varepsilon>0$, because truncation may introduce a
small false-negative probability.  Determining whether the corresponding
zero-error worst-case limit always equals $c_0(f)$ is a separate question.
\end{Rem}

\section*{Acknowledgement}
The author would like to thank Ashwin Nayak and Dave Touchette for their kindness and support, Atsuya Hasegawa and Masayuki Miyamoto for monthly discussions,
and the author's former adviser Michał Oszmaniec for his kindness and allowing the author to take time for the author's individual research at CFT PAN.

The author acknowledges the support of the Natural Sciences and Engineering Research Council of Canada (NSERC) [funding reference no. ALLRP-578455-2022]. This research was supported in part by the Institute for Quantum Computing.
\section*{Disclosure of AI Use}
The author used OpenAI's ChatGPT (GPT-5.5 and GPT-5.6) solely to improve the readability of the manuscript and to identify minor typographical and grammatical errors. All AI-generated suggestions were reviewed by the author, who takes full responsibility for the final content of the manuscript.

\bibliography{/Users/daikisuruga/Dropbox/Citations/mathematics_D, /Users/daikisuruga/Dropbox/Citations/computational_D, /Users/daikisuruga/Dropbox/Citations/query_D, /Users/daikisuruga/Dropbox/Citations/quant_info_D, /Users/daikisuruga/Dropbox/Citations/comm_comp_D, /Users/daikisuruga//Dropbox/Citations/books_D}

\newcommand{\etalchar}[1]{$^{#1}$}
\begin{thebibliography}{BDGKW20}

\bibitem[ABB{\etalchar{+}}16]{ABB+16}
Andris Ambainis, Kaspars Balodis, Aleksandrs Belovs, Troy Lee, Miklos Santha,
  and Juris Smotrovs.
\newblock Separations in query complexity based on pointer functions.
\newblock In {\em 48th annual Symposium on Theory of Computing}, pages
  800--813, 2016.

\bibitem[ACLGT10]{ACLT10}
Andris Ambainis, Andrew~M Childs, Fran{\c{c}}ois Le~Gall, and Seiichiro Tani.
\newblock The quantum query complexity of certification.
\newblock {\em Quantum Information \& Computation}, 10(3):181--189, 2010.

\bibitem[AMRR11]{AMRR11}
Andris Ambainis, Lo{\"\i}ck Magnin, Martin Roetteler, and J{\'e}r{\'e}mie
  Roland.
\newblock Symmetry-assisted adversaries for quantum state generation.
\newblock In {\em 26th annual Conference on Computational Complexity}, pages
  167--177. IEEE, 2011.

\bibitem[BB19]{BB19}
Eric Blais and Joshua Brody.
\newblock Optimal separation and strong direct sum for randomized query
  complexity.
\newblock In {\em 34th Computational Complexity Conference}, pages 1--17, 2019.

\bibitem[BBCR09]{BBCR09}
Boaz Barak, Mark Braverman, Xi~Chen, and Anup Rao.
\newblock Direct sums in randomized communication complexity.
\newblock {\em Electronic Colloquium on Computational Complexity}, 44, 2009.

\bibitem[BDB20]{BB20}
Shalev Ben-David and Eric Blais.
\newblock A tight composition theorem for the randomized query complexity of
  partial functions.
\newblock In {\em 61st annual Symposium on Foundations of Computer Science},
  pages 240--246, 2020.

\bibitem[BDB25]{BB25}
Shalev Ben-David and Eric Blais.
\newblock Direct product theorems for randomized query complexity.
\newblock In {\em 66th Annual Symposium on Foundations of Computer Science},
  pages 710--733, 2025.

\bibitem[BDGKW20]{BGK+20}
Shalev Ben-David, Mika G\"{o}\"{o}s, Robin Kothari, and Thomas Watson.
\newblock {When Is Amplification Necessary for Composition in Randomized Query
  Complexity?}
\newblock In {\em Approximation, Randomization, and Combinatorial Optimization.
  Algorithms and Techniques (APPROX/RANDOM 2020)}, volume 176, pages
  28:1--28:16, 2020.

\bibitem[BDK18]{BK18}
Shalev Ben-David and Robin Kothari.
\newblock Randomized query complexity of sabotaged and composed functions.
\newblock {\em Theory of Computing}, 14(1):1--27, 2018.

\bibitem[BdW02]{BdW02}
Harry Buhrman and Ronald de~Wolf.
\newblock Complexity measures and decision tree complexity: a survey.
\newblock {\em Theoretical Computer Science}, 288(1):21--43, 2002.

\bibitem[BI87]{BI87}
Manuel Blum and Russell Impagliazzo.
\newblock Generic oracles and oracle classes.
\newblock In {\em 28th annual Symposium on Foundations of Computer Science},
  pages 118--126, 1987.

\bibitem[BKLS23]{BTLS23}
Joshua Brody, Jae~Tak Kim, Peem Lerdputtipongporn, and Hariharan Srinivasulu.
\newblock A strong {XOR} lemma for randomized query complexity.
\newblock {\em Theory of Computing}, 19(1):1--14, 2023.

\bibitem[BKST24]{BKST24}
Guy Blanc, Caleb Koch, Carmen Strassle, and Li-Yang Tan.
\newblock {A Strong Direct Sum Theorem for Distributional Query Complexity}.
\newblock In {\em 39th Computational Complexity Conference}, volume 300, pages
  16:1--16:30, 2024.

\bibitem[BR11]{BR11}
Mark Braverman and Anup Rao.
\newblock Information equals amortized communication.
\newblock In {\em 52nd annual Symposium on Foundations of Computer Science},
  pages 748--757, 2011.

\bibitem[Bra17]{Bra17}
Mark Braverman.
\newblock Interactive information complexity.
\newblock {\em SIAM Review}, 59(4):803--846, 2017.

\bibitem[BYJKS04]{BJKS04}
Ziv Bar-Yossef, T.S. Jayram, Ravi Kumar, and D.~Sivakumar.
\newblock An information statistics approach to data stream and communication
  complexity.
\newblock {\em Journal of Computer and System Sciences}, 68(4):702--732, 2004.

\bibitem[CSWY01]{CSWY01}
Amit Chakrabarti, Yaoyun Shi, Anthony Wirth, and Andrew Yao.
\newblock Informational complexity and the direct sum problem for simultaneous
  message complexity.
\newblock In {\em 42nd annual Symposium on Foundations of Computer Science},
  pages 270--278, 2001.

\bibitem[Dru11]{Dru11}
Andrew Drucker.
\newblock Improved direct product theorems for randomized query complexity.
\newblock In {\em 26th Computational Complexity Conference}, pages 1--11, 2011.

\bibitem[FKNN95]{FKNN95}
Tom{\'a}s Feder, Eyal Kushilevitz, Moni Naor, and Noam Nisan.
\newblock Amortized communication complexity.
\newblock {\em SIAM Journal on computing}, 24(4):736--750, 1995.

\bibitem[GJPW17]{GJPW17}
Mika G\"{o}\"{o}s, T.~S. Jayram, Toniann Pitassi, and Thomas Watson.
\newblock {Randomized Communication vs. Partition Number}.
\newblock In {\em 44th International Colloquium on Automata, Languages, and
  Programming (ICALP 2017)}, volume~80, pages 52:1--52:15, 2017.

\bibitem[GM21]{GM21}
Mika G{\"o}{\"o}s and Gilbert Maystre.
\newblock A majority lemma for randomised query complexity.
\newblock In {\em 36th Computational Complexity Conference}, 2021.

\bibitem[GNW11]{GNW11}
Oded Goldreich, Noam Nisan, and Avi Wigderson.
\newblock On {Yao}'s {XOR}-lemma.
\newblock {\em Studies in Complexity and Cryptography}, pages 273--301, 2011.

\bibitem[HH86]{HH86}
Juris Hartmanis and Lane~A. Hemachandra.
\newblock One-way functions, robustness, and the non-isomorphism of
  {NP}-complete sets.
\newblock Technical Report TR86-796, Cornell University, 1986.

\bibitem[IJKW08]{IJKW08}
Russell Impagliazzo, Ragesh Jaiswal, Valentine Kabanets, and Avi Wigderson.
\newblock Uniform direct product theorems: simplified, optimized, and
  derandomized.
\newblock In {\em 40th annual Symposium on Theory of Computing}, pages
  579--588, 2008.

\bibitem[IW97]{IW97}
Russell Impagliazzo and Avi Wigderson.
\newblock {P= BPP} if {E} requires exponential circuits: Derandomizing the xor
  lemma.
\newblock In {\em 29th annual Symposium on Theory of Computing}, pages
  220--229, 1997.

\bibitem[Jai20]{Jain20}
Rahul Jain.
\newblock A near-optimal direct-sum theorem for communication complexity.
\newblock {\em arXiv preprint arXiv:2008.07188}, 2020.

\bibitem[JK09]{JK09}
Rahul Jain and Hartmut Klauck.
\newblock New results in the simultaneous message passing model via information
  theoretic techniques.
\newblock In {\em 24th Computational Complexity Conference}, pages 369--378,
  2009.

\bibitem[JK22]{JK22}
Rahul Jain and Srijita Kundu.
\newblock A direct product theorem for quantum communication complexity with
  applications to device-independent {QKD}.
\newblock In {\em 62nd annual Symposium on Foundations of Computer Science},
  pages 1285--1295, 2022.

\bibitem[JKS10]{JKS10}
Rahul Jain, Hartmut Klauck, and Miklos Santha.
\newblock Optimal direct sum results for deterministic and randomized decision
  tree complexity.
\newblock {\em Information Processing Letters}, 110(20):893--897, 2010.

\bibitem[JPY12]{JPY12}
Rahul Jain, Attila Pereszl{\'e}nyi, and Penghui Yao.
\newblock A direct product theorem for the two-party bounded-round public-coin
  communication complexity.
\newblock In {\em 53rd annual Symposium on Foundations of Computer Science},
  pages 167--176, 2012.

\bibitem[JRS03]{JRS03d}
Rahul Jain, Jaikumar Radhakrishnan, and Pranab Sen.
\newblock A direct sum theorem in communication complexity via message
  compression.
\newblock In {\em 30th International Colloquium on Automata, Languages and
  Programming}, pages 300--315, 2003.

\bibitem[KMSY14]{KMSY14}
Gillat Kol, Shay Moran, Amir Shpilka, and Amir Yehudayoff.
\newblock Direct sum fails for zero error average communication.
\newblock In {\em 5th conference on Innovations in Theoretical Computer
  Science}, pages 517--522, 2014.

\bibitem[Mon13]{Mon13}
Ashley Montanaro.
\newblock A composition theorem for decision tree complexity.
\newblock {\em arXiv preprint arXiv:1302.4207}, 2013.

\bibitem[MS15]{MS15}
Sagnik Mukhopadhyay and Swagato Sanyal.
\newblock Towards better separation between deterministic and randomized query
  complexity.
\newblock In {\em 35th IARCS Annual Conference on Foundations of Software
  Technology and Theoretical Computer Science}, page 206, 2015.

\bibitem[MWY13]{MWY13}
Marco Molinaro, David~P Woodruff, and Grigory Yaroslavtsev.
\newblock Beating the direct sum theorem in communication complexity with
  implications for sketching.
\newblock In {\em 24th annual ACM-SIAM symposium on Discrete algorithms}, pages
  1738--1756, 2013.

\bibitem[Sha01]{Sha01}
Ronen Shaltiel.
\newblock Towards proving strong direct product theorems.
\newblock In {\em 16th Computational Complexity Conference}, pages 107--117,
  2001.

\bibitem[Tar89]{Tar89}
G{\'a}bor Tardos.
\newblock Query complexity, or why is it difficult to separate {NP}$^{A}$
  $\cap$ co{NP}$^{A}$ from {P}$^{A}$ by random oracles {A}?
\newblock {\em Combinatorica}, 9(4):385--392, December 1989.

\end{thebibliography}
%\bibliography{/Users/Suruga_research/Dropbox/Citations/Scheduling, /Users/Suruga_research/Dropbox/Citations/comm_comp_D, /Users/Suruga_research/Dropbox/Citations/books_D}
\bibliographystyle{alpha}
\appendix
\section{Omitted proofs}\label{Appendix_Proof}

\begin{Fact1}
There is a function $f_n:\Bset^n \to \Bset$, a distribution $\mu_n$ on $\Bset^n$ and an error $0 < \varepsilon_n <1$ such that 
\begin{equation*}
\overline{D}^{\mu_n}_0(f_n)
= O(1) \Mand D^{\mu_n}_\varepsilon(f_n) = \Omega(n).
\end{equation*}
\end{Fact1}
\begin{proof}
Define first the distribution on $\Bset^n$ as $P(x_1 = 1) = 1/n$, and other $x_i$'s are uniformly distributed.
And define $f(x_1, \ldots, x_n) = \bigoplus_{2 \leq i \leq n} x_i$ if $x_1 = 1$ and $f(x_1, \ldots, x_n) = 0$ if $x_1 = 0$.
\par
The function and distribution defined in this manner satisfy $\ED_0(f) = O(1)$ because an algorithm that checks $x_1$ deterministically and checks all the other bits only if $x_1 = 1$ has the expectation cost equal to $2$, because of the definition of the distribution.
\par
To prove $D^{\mu}_\varepsilon(f) = \Omega(n)$, set $\varepsilon := 1/n^2$. We first observe that any query algorithm $\calA \in [f, \mu, \varepsilon]$ becomes that of the $(n-1)$-bit parity function under the uniform distribution, by internally setting $x_1 = 1$ and running $\calA$. This modification changes the error $\varepsilon = 1/n^2$ to $1/n$. (This can be checked directly from the definition of distribution.) As we know $D^\mathsf{Uni}_\varepsilon(\mathsf{Parity}) = \Omega(n)$, $D^\mu_\varepsilon(f) = \Omega(n)$ holds.
\end{proof}

\begin{Fact}\label{Fact_appendix_cont}
There exists a total relation $f \subseteq B \times [m]~(B \subseteq \Bset^m)$ and  an error $\varepsilon$ such that
\begin{equation*}
\ER_0(f) = \Theta(1) \Mand R_\varepsilon(f) = \Omega(m).
\end{equation*}
\end{Fact}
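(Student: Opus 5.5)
We need a total relation with $\ER_0(f)=\Theta(1)$ and $R_\varepsilon(f)=\Omega(m)$. The plan is to adapt the construction in the proof of \cref{Fact_f_dist}, using a relation on a restricted input set $B$ that forces a zero-error algorithm to locate a rare marked position.

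\textbf{Construction.} Let $m$ be even and let $B\subseteq\Bset^m$ consist of the strings with Hamming weight exactly $m/2$. Define $f\subseteq B\times[m]$ by
\[
(x,i)\in f \iff x_i=1 .
\]
Every $x\in B$ has a position with $x_i=1$, so $f$ is a total relation on $B$.

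\textbf{Upper bound on $\ER_0(f)$.} Query uniformly random positions, with or without replacement, until a $1$ is found, and output that position. The algorithm is always correct. On every $x\in B$, each fresh query hits a $1$ with probability at least $1/2$, so the expected number of queries is at most $2$. Hence $\ER_0(f)=O(1)$. The matching lower bound $\Omega(1)$ is trivial, since a zero-error algorithm must make at least one query. Thus $\ER_0(f)=\Theta(1)$.

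\textbf{Lower bound on $R_\varepsilon(f)$: the problem with small error.} For $R_\varepsilon(f)$ we must choose $\varepsilon$ so that a worst-case algorithm with $o(m)$ queries fails. With constant error this is false: $O(\log(1/\varepsilon))$ random queries find a $1$ except with probability $\varepsilon$. So $\varepsilon$ has to be tiny, for instance $\varepsilon=2^{-m}$, or even $\varepsilon<\binom{m}{m/2}^{-1}$.

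\textbf{Lower bound on $R_\varepsilon(f)$: Yao argument.} Take $\mu$ uniform on $B$ and fix a deterministic tree of depth $q\le m/2-1$. Follow the path on which every answer is $0$. The set of inputs consistent with that path has $\mu$-probability
\[
\binom{m-q}{m/2}\Big/\binom{m}{m/2}\ \geq\ \binom{m}{m/2}^{-1}.
\]
On this path the tree outputs a position, and only a few of the consistent inputs have a $1$ there; a constant fraction of them err. The distributional error is therefore at least $c\binom{m}{m/2}^{-1}\ge c\,2^{-m}$. Choosing $\varepsilon$ smaller than this bound forces $q\ge m/2$, so $R_\varepsilon(f)=\Omega(m)$ by Yao's principle.

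\textbf{Main obstacle.} The $\Omega(m)$ lower bound holds only for exponentially small $\varepsilon$. Any fixed constant error makes $R_\varepsilon(f)=O(1)$, so the fact is meaningful only with $\varepsilon=\varepsilon_m$ depending on $m$, exactly as in \cref{Fact_f_dist}, and I would state it that way.

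If a constant $\varepsilon$ is desired, the plan would instead need a pointer-function style relation as in Ref.~\cite{ABB+16}, where certificates are cheap to find in expectation but hard in the worst case. I expect this to be the real difficulty, and I would fall back to the $\varepsilon_m$ version above.
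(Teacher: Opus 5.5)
Your proposal is correct and follows essentially the paper's route: the same relation (the paper takes inputs with at least $m/2$ ones rather than exactly $m/2$), the same random-sampling zero-error algorithm, and a lower bound valid only for $\varepsilon$ below roughly $1/|B|$. The paper obtains that lower bound by combining $D(f)=\Omega(m)$ with the union-bound derandomization $R_\varepsilon(f)=D(f)$ for $\varepsilon<1/|\mathcal{X}|$, and your uniform-$\mu$ Yao argument along the all-zero path is the same argument written out directly. One small fix: the erring fraction on that path need not be constant, since for an unqueried output position $i$ it equals $(m/2-q)/(m-q)$, which is about $2/m$ when $q=m/2-1$; however, at least one consistent input has $x_i=0$ (and if $i$ was queried, every consistent input errs), and that single input already gives error at least $\binom{m}{m/2}^{-1}$.
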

\begin{proof}
Define the input space $B \subseteq \Bset^m$ to be a set of bit strings that have at least $m/2$ fraction of $1$'s, and $f(x) \subseteq [m]$ for $x \in B$ is a set of coordinates $i$ satisfying $x_i =1$.
Then $\ER_0(f) = \Theta(1)$: repeatedly sample a uniformly random coordinate $i$, query $x_i$, and output $i$ once $x_i=1$. Since at least half of the coordinates are $1$, the expected number of queries is at most $2$ and the algorithm has zero error.
\par
On the other hand, any deterministic algorithm must have the complexity $D(f) = \Omega(m)$. This yields $R_\varepsilon(f) = \Omega(m)$ for a smaller $\varepsilon$ due to~\cref{Fact_appendix_cont}.
\end{proof}

\begin{Fact2}
For any total relation $f \subseteq \mathcal{X} \times \mathcal{Y}$, let $\varepsilon < 1/|\mathcal{X}|$. Then $R_\varepsilon(f) = D(f)$.
\end{Fact2}

\begin{proof}
Let $\pi \in [f, \varepsilon]$ be an optimal algorithm: $\Cost(\pi) = R_\varepsilon(f)$ and denote the set of randomness used in $\pi$ by $R$.
The set of randomness that may make a mistake on some input $x \in \mathcal{X}$ is then defined as
\begin{equation*}
R_\mathsf{wrong} := \{r \in R \mid \exists x \textrm{~s.t.~} \pi_r(x) \notin f(x)\}
\end{equation*}
where $\pi_r(x)$ denotes the output of the algorithm $\pi$ when the input is $x \in \mathcal{X}$ and the randomness is $r \in R$.
\par
Since $\pi$ has the worst-case error $\leq \varepsilon$, for any $x \in \mathcal{X}$, 
\begin{equation*}
\Pr_R(\{r \in R \mid \pi_r(x) \notin f(x)\}) < \frac{1}{|\mathcal{X}|}.
\end{equation*}
By summing up over all $x \in \mathcal{X}$, this leads to 
\begin{equation*}
\Pr_R(R_\mathsf{wrong}) \leq \sum_{x \in \mathcal{X}} \Pr_R(\{r \in R \mid \pi_r(x) \notin f(x)\}) < 1.
\end{equation*}
This means that there exists $r_\mathsf{good} \in R\setminus R_\mathsf{wrong}$, which satisfies
$\pi_{r_\mathsf{good}}(x) \in f(x)$ for any $x \in \mathcal{X}$.
Fixing the randomness to $r_\mathsf{good}$, the deterministic algorithm $\pi_{r_\mathsf{good}}$ always outputs a correct value.
This means $D(f) \leq \Cost(\pi_{r_\mathsf{good}})$.
Since the new algorithm $\pi_{r_\mathsf{good}}$ must have a smaller complexity than (or at most equal to) $\Cost(\pi)$, 
we also observe $\Cost(\pi_{r_\mathsf{good}}) \leq R_\varepsilon(f)$.
Together with the trivial relation $R_\varepsilon(f) \leq D(f)$, these arguments show the desired statement.
\end{proof}

\end{document}